\journalname{International Journal of Information Security}
\begin{document}

\title{Perfect Secrecy Systems Immune to Spoofing Attacks
\thanks{This work was supported by the Deutsche Forschungsgemeinschaft (DFG) via a Heisenberg grant (Hu954/4) and a Heinz Maier-Leibnitz Prize grant (Hu954/5).}
}

\titlerunning{Perfect Secrecy Systems Immune to Spoofing Attacks}

\author{Michael Huber}

\authorrunning{Michael Huber}

\institute{M. Huber \at
              Wilhelm-Schickard-Institute for Computer Science\\ University of Tuebingen\\
              Sand~13, 72076~Tuebingen, Germany\\
              \email{michael.huber@uni-tuebingen.de}\\
              Phone +49 7071 2977173\\
              Fax         +49 7071 295061 
          }

\date{Received: June 24, 2011 / Revised: May 9, 2012} 

\maketitle
\begin{abstract}
We present novel perfect secrecy systems \linebreak that provide immunity to spoofing attacks under equi-\linebreak probable source probability distributions. On the theoretical side, relying on an existence result for $t$-designs by Teirlinck, our construction method constructively generates systems that can reach an arbitrary high level of security. On the practical side, we obtain, via cyclic difference families, very efficient constructions of new optimal systems that are onefold secure against spoofing. Moreover,  we construct, by means of $t$-designs for large values of $t$,  the first near-optimal systems that are $5$- and $6$-fold secure as well as further systems with a feasible number of keys that are $7$-fold secure against spoofing. We apply our results furthermore to a recently extended authentication model, where the opponent has access to a verification oracle. We obtain this way novel perfect secrecy systems with immunity to spoofing in the verification oracle model. 
\keywords{Information theoretic security \and perfect secrecy system \and spoofing attack \and verification oracle model}
\end{abstract}

\section{Introduction}\label{Introduction}

Perfect secrecy systems (or codes) play a prominent role in information theory and cryptography. In terms of information theoretic security, these systems shall ensure protection of the confidentiality of sensitive information in the presence of eavesdropping. The information theoretic, or unconditional, security model does not depend on any complexity assumptions and hence cannot be broken given unlimited computational resources. A well-known example of a perfect secrecy system is Vernam's One-time Pad. In his landmark paper ``Communication theory of secrecy systems''~\cite{Shan49}, Shannon established a fundamental characterization of optimal perfect secrecy systems: A key-minimal secrecy system achieves perfect secrecy if and only if the encryption matrix is a Latin square and the keys are used with equal probability. Important generalizations have been obtained since then (see, e.g.,~\cite{GM90,Stin90,Stin93}).
In addition to the concept of perfect secrecy, various scenarios require that the systems provide robustness against spoofing attacks. Concerning the aspect of authenticity, the integrity of information that is communicated via a potentially insecure channel shall be assured. Often such constructions involve a variety of tools from combinatorics (see, e.g.,~\cite{Hub2009,Hub2010,Hu2010now,pei06,Stin90}).

In this paper, we present novel perfect secrecy systems that provide immunity to spoofing attacks under equiprobable source probability distributions. In the past decades various perfect secrecy systems have been constructed that offer zero (like Vernam's One-time Pad) or onefold security against spoofing. Recently, in~\cite{Hub2009}, the first infinite classes of optimal perfect secrecy systems that achieve twofold security have been constructed as well as further optimal systems that offer up to $4$-fold security against spoofing under equiprobable source probability distributions. This has been \linebreak achieved by means of particular Steiner $t$-designs, e.g., the famous \mbox{$5$-$(12,6,1)$} Witt design. However, as Steiner \mbox{$t$-designs} are not known to exist for $t>5$, the level of security cannot be augmented any further via this approach.
In the present paper, we develop a more general construction method, which allows us to use \mbox{$t$-designs} for higher values of $t$ under equiprobable source probability distributions. On the theoretical side, relying on Teirlinck's existence result for $t$-designs~\cite{Teir1987}, our method constructively generates systems that can reach an arbitrary high security level. On the practical side,  by using cyclic difference families, we give very efficient constructions of new optimal systems that are onefold secure against spoofing. By employing \mbox{$t$-designs} for large values of $t$, we also present the first near-optimal systems that are $5$- and $6$-fold secure as well as further systems with a feasible number of keys that are $7$-fold secure against spoofing. Moreover, we apply our results to an extended authentication model, where the opponent has access to a verification oracle. This model, which has been recently introduced and investigated in~\cite{gold04,saf04,tonien07,tonien09}, allows a more powerful pro-active attack scenario. The opponent may send a message of the opponent's choice to the receiver and observe the receiver's response whether or not the receiver accepts it as authentic. This can be modeled in terms of a verification oracle with an online/offline variant that provides a response to a query message in the same way as the message would be accepted or not by the legitimate receiver. We obtain this way novel perfect secrecy systems with immunity to spoofing attacks in the verification oracle model.

The organization of the paper is as follows: The underlying information theoretic Shannon--Simmons model is given in Section~\ref{Model}. Section~\ref{CombStru} introduces background material on combinatorial structures that is important for our further purposes. Section~\ref{Known} presents a short overview of known constructions of perfect secrecy systems that provide robustness against spoofing attacks. In Section~\ref{NewMethod}, a general construction method is developed and we  examine the level of security from a theoretical point of view. The subsequent two sections deal then with the practical side: we give explicit constructions of optimal systems with onefold immunity to spoofing in Section~\ref{one_immune}, and of near-optimal and other feasible systems with multifold immunity in Section~\ref{multi_immune}. In Section~\ref{oracle}, we apply our constructions to the verification oracle model. The paper is concluded in Section~\ref{Conl}.


\section{The Shannon--Simmons Model}\label{Model}

We rely on the \emph{information theoretic} (or \emph{unconditional}) secrecy model developed by Shannon~\cite{Shan49}, and by Simmons (e.g.,~\cite{Sim85,Sim92}) including authentication. Our notation follows, for the most part, that of~\cite{Stin90,Mass86}.
In this model of authentication and secrecy three participants are involved: a \emph{transmitter}, a \emph{receiver}, and an \emph{opponent}.  The transmitter wants to communicate information to the receiver via a public communications channel. The receiver in return would like to be confident that any received information actually came from the transmitter and not from some opponent (\emph{integrity} of information). The transmitter and the receiver are assumed to trust each other. This is known as an \emph{authentication system} (or \emph{authentication code, $A$-code}).

In what follows, let $\mathcal{S}$ denote a set of $k$ \emph{source states} (or \emph{plaintexts}), $\mathcal{M}$ a set of $v$ \emph{messages} (or \emph{ciphertexts}), and $\mathcal{E}$ a set of $b$ \emph{encoding rules} (or \emph{keys}). Using an encoding rule $e\in \mathcal{E}$,
the transmitter encrypts a source state $s \in \mathcal{S}$ to obtain the message $m=e(s)$ to be sent over the channel. The encoding rule is an injective function from $\mathcal{S}$ to $\mathcal{M}$, and is communicated to the receiver via a secure channel prior to any messages being sent. For each encoding rule $e \in \mathcal{E}$, let $M(e):=\{e(s) : s \in \mathcal{S}\}$ denote the set of \emph{valid} messages. A received message $m$ will be accepted by the receiver as being authentic if and only if $m \in M(e)$. When this is fulfilled, the receiver decrypts the message $m$ by applying the decoding rule $e^{-1}$, where \[e^{-1}(m)=s \Leftrightarrow e(s)=m.\]
An authentication system can be represented algebrai-\linebreak cally by a $(b \times k)$\emph{-encoding matrix} with the rows indexed by the encoding rules, the columns indexed by the source states, and the entries defined by $a_{es}:=e(s)$ ($1\leq e \leq b$, $1\leq s \leq k$).

Concerning authenticity, we address the following scenario, called \emph{spoofing attack} of order $i$ (cf.~\cite{Mass86}):
Suppose that an opponent observes $i\geq 0$ distinct messages, which are sent through the public channel using the same encoding rule. The opponent then inserts a new message $m'$ (being distinct from the $i$ messages already sent), hoping to have it accepted by the receiver as authentic.
The cases $i=0$ and $i=1$ are called \emph{impersonation game} and \emph{substitution game}, respectively. These cases have been studied in detail in recent years, whereas less is known for higher orders.

For any $i$, we assume that there is some probability distribution on the set of \mbox{$i$-subsets} of source states, so that any set of $i$ source states has a non-zero probability of occurring. For simplification, we ignore the order in which the $i$ source states occur, and assume that no source state occurs more than once.
Given this probability distribution  $p_S$  on $\mathcal{S}$, the receiver and transmitter choose a probability distribution $p_E$ on $\mathcal{E}$  (called \emph{encoding strategy}) with associated independent random variables $S$ and $E$, respectively. These distributions are known to all participants and induce a third distribution, $p_M$, on $\mathcal{M}$ with associated random variable $M$. The \emph{deception probability} $P_{d_i}$ is the probability that the opponent can deceive the receiver with a spoofing attack of order $i$. Combinatorial lower bounds can be given as follows (cf.~\cite{Mass86}).

\begin{theorem}[Massey]
In an authentication system with $k$ source states and $v$ messages, for every $0 \leq i \leq t$, the deception probabilities are bounded below by
\[P_{d_i}\geq \frac{k-i}{v-i}.\]
\end{theorem}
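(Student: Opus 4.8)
\emph{Proof strategy.} The plan is to derive the bound by a single averaging (pigeonhole) argument applied after conditioning on the set of messages the opponent has observed. Fix $i$ with $0\le i\le t$ and let $M^i$ denote the random variable recording the (unordered) set of $i$ distinct messages transmitted over the public channel under the encoding rule in force. By the definition of a spoofing attack of order $i$, an opponent playing optimally achieves
\[
 P_{d_i}=\sum_{m^i}p_M(m^i)\,\max_{m'\notin m^i}\Pr\bigl[m'\in M(E)\mid M^i=m^i\bigr],
\]
where the outer sum ranges over the sets $m^i$ with $p_M(m^i)>0$ and the inner maximum is taken over the $v-i$ messages distinct from the $i$ messages already seen. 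At least one such $m^i$ occurs, since by hypothesis every $i$-subset of source states has positive probability, so all the conditional probabilities above are well defined.

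First I would bound a single conditioned term from below. If $p_M(m^i)>0$, then conditioned on $M^i=m^i$ the encoding rule $E$ is supported on the keys $e$ with $m^i\subseteq M(e)$: the $i$ observed messages were produced by the legitimate transmitter from $i$ distinct source states under $e$, and since $e$ is injective one has $|M(e)|=k$, whence $|M(e)\setminus m^i|=k-i$ for every such $e$. Summing the conditional acceptance probabilities over all admissible insertions and interchanging the order of summation therefore gives
\[
 \sum_{m'\notin m^i}\Pr\bigl[m'\in M(E)\mid M^i=m^i\bigr]
 =\sum_{e:\,m^i\subseteq M(e)}\Pr\bigl[E=e\mid M^i=m^i\bigr]\,\bigl|M(e)\setminus m^i\bigr|
 =k-i .
\]
Since this is a sum of $v-i$ nonnegative terms, at least one of them is at least $(k-i)/(v-i)$, and hence $\max_{m'\notin m^i}\Pr[m'\in M(E)\mid M^i=m^i]\ge(k-i)/(v-i)$.

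Finally I would substitute this pointwise estimate back into the formula for $P_{d_i}$ and use $\sum_{m^i}p_M(m^i)=1$ to conclude $P_{d_i}\ge(k-i)/(v-i)$. The only point requiring care is the bookkeeping in the middle step: one must verify that all $i$ observed messages are genuinely valid under whatever key is in force --- so that exactly $k-i$, and not fewer, of the remaining $v-i$ messages are valid --- and that restricting the opponent to messages $m'\notin m^i$ indeed leaves $v-i$ candidates. Everything else is the law of total probability combined with an elementary averaging inequality, so I do not expect a genuine obstacle, only the need to record the conditioning hypotheses (positivity of $p_M(m^i)$ and injectivity of the encoding rules) cleanly.
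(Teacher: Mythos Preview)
The paper does not actually prove this statement: it is quoted as Massey's theorem with a reference to~\cite{Mass86} and no argument is supplied. So there is no ``paper's own proof'' to compare against.

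Your argument is correct and is the standard one. Conditioning on the observed message set $m^i$, every key $e$ compatible with $m^i$ satisfies $m^i\subseteq M(e)$ and $|M(e)|=k$ by injectivity, so exactly $k-i$ of the remaining $v-i$ messages are valid under~$e$; interchanging the sums gives $\sum_{m'\notin m^i}\Pr[m'\in M(E)\mid M^i=m^i]=k-i$, and the averaging inequality yields the pointwise bound $(k-i)/(v-i)$, which survives the outer expectation. The caveats you flag (positivity of $p_M(m^i)$, injectivity, and that $m'$ ranges over $v-i$ candidates) are exactly the right ones, and none of them causes trouble under the model assumptions stated in the paper (in particular the convention that source states are not repeated and that the order of observation is ignored). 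Nothing further is needed.
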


An authentication system is called $t$\emph{-fold secure \linebreak against spoofing} if $P_{d_i}= (k-i)/(v-i)$ for all $0 \leq i \leq t$.
The following theorem (cf.~\cite{Mass86,Sch86}) establishes a combinatorial lower bound on the number of encoding rules for this kind of attack.

\begin{theorem}[Massey--Sch\"{o}bi]\label{thm_mas_sch}
If an authentication system is $(t-1)$-fold against spoofing, then the number of encoding rules is bounded below by
\[b \geq \frac{{v \choose t}}{{k \choose t}}.\]
\end{theorem}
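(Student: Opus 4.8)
\medskip\noindent\emph{Proof idea.}
The plan is to distil from $(t-1)$-fold security a combinatorial covering property of the sets of valid messages, and then close by a counting argument. We may assume $k\ge t$, since otherwise $\binom{k}{t}=0$ and the asserted inequality is vacuous. By hypothesis there is an encoding strategy $p_E$ for which $P_{d_i}=(k-i)/(v-i)$ for all $0\le i\le t-1$; call $e\in\mathcal{E}$ \emph{effective} if $p_E(e)>0$. The core claim is that every $t$-subset of $\mathcal{M}$ is contained in $M(e)$ for some effective $e$. Granting this, summing the number of $t$-subsets inside $M(e)$ over the $b$ encoding rules gives $b\binom{k}{t}$, and since each of the $\binom{v}{t}$ many $t$-subsets of $\mathcal{M}$ is counted at least once, $b\binom{k}{t}\ge\binom{v}{t}$, which is the stated bound.

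The core claim would be proved by induction on $j$, establishing for $0\le j\le t$ that every $j$-subset of $\mathcal{M}$ lies in $M(e)$ for some effective $e$. The engine is the following observation about a $j$-subset $M_0$ with $M_0\subseteq M(e)$ for some effective $e$ and with $j\le t-1$. Since the source-state distribution has full support, such an $M_0$ is observed by the opponent with positive probability, and conditioned on observing exactly $M_0$ one has $M_0\subseteq M(E)$, so $\sum_{m\in\mathcal{M}\setminus M_0}\Pr[\,m\in M(E)\mid M_0\,]=k-j$; this is an average of $(k-j)/(v-j)$ over the $v-j$ messages outside $M_0$. As the opponent may insert whichever such $m$ maximises the acceptance probability, the conditional deception probability given $M_0$ is at least $(k-j)/(v-j)$, and averaging over the observable $M_0$ recovers Massey's bound $P_{d_j}\ge(k-j)/(v-j)$. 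Hence the equality $P_{d_j}=(k-j)/(v-j)$ forces, for every observable $M_0$, that $\max_{m\notin M_0}\Pr[\,m\in M(E)\mid M_0\,]$ equals the average $(k-j)/(v-j)$; but a maximum can coincide with the average of the same quantities only when they are all equal, so $\Pr[\,m\in M(E)\mid M_0\,]=(k-j)/(v-j)$ for \emph{every} $m\in\mathcal{M}\setminus M_0$.

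With this in hand the induction is short. For $j=0$ there is nothing to do; for $j=1$, the observation with $M_0=\emptyset$ together with $P_{d_0}=k/v$ gives $\Pr[\,m\in M(E)\,]=k/v>0$ for every $m$, so every singleton lies in some effective $M(e)$. For the inductive step, assume the statement for some $j$ with $1\le j\le t-1$ and take a $(j+1)$-subset $\{m_0\}\cup M_0$ with $|M_0|=j$; the inductive hypothesis puts $M_0$ inside some effective $M(e)$, so the observation applies and yields $\Pr[\,m_0\in M(E)\mid M_0\,]=(k-j)/(v-j)>0$ (the numerator is positive as $k\ge t>j$), whence some effective $e$ has $M_0\cup\{m_0\}\subseteq M(e)$. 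This gives the statement at level $j+1$, hence at level $t$. I expect the only genuinely delicate step to be the middle one: upgrading the averaging inequality to the pointwise equidistribution $\Pr[\,m\in M(E)\mid M_0\,]=(k-j)/(v-j)$, which uses both the equality case of Massey's bound and the full-support hypothesis on the source states (the latter being exactly what makes every block-contained subset genuinely observable); the concluding count is then routine.
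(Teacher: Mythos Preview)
The paper does not itself prove Theorem~\ref{thm_mas_sch}: it is merely quoted from the literature (Massey~\cite{Mass86} and Sch\"{o}bi~\cite{Sch86}), so there is no in-paper proof to compare against. That said, your argument is essentially the classical one and is correct. The key mechanism---showing that equality in $P_{d_j}=(k-j)/(v-j)$ forces $\Pr[\,m\in M(E)\mid M_0\,]$ to be constant in $m$ for every observable $M_0$, via the ``maximum equals average only when all terms coincide'' step---is exactly how the covering property (every $t$-subset of $\mathcal{M}$ lies in some effective $M(e)$) is extracted, and the final double count $b\binom{k}{t}\ge\binom{v}{t}$ is routine.

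Two small points worth tightening if you turn this into a formal proof. First, be explicit that the conditioning ``$\mid M_0$'' is on the \emph{event of observing} $M_0$ (which induces a posterior on $E$ supported in $\{e:M_0\subseteq M(e)\}$), not merely on the event $M_0\subseteq M(E)$; your identity $\sum_{m\notin M_0}\Pr[\,m\in M(E)\mid M_0\,]=k-j$ holds for either conditioning, but only the former is what enters $P_{d_j}$. Second, when you pass from ``$P_{d_j}$ equals the lower bound'' to ``every conditional deception probability equals the lower bound'', you are implicitly using that each observable $M_0$ carries positive weight in the average; this is precisely where the full-support assumption on source-state subsets (stated just before the theorem in the paper) is used, and you correctly flag it.
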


Such a system is called \emph{optimal} if the number of encoding rules meets the lower bound with equality.

Concerning secrecy, we recall Shannon's fundamental idea of perfect secrecy (cf.~\cite{Shan49}):
An authentication system is said to have \emph{perfect secrecy} if
\[p_S(s | m)=p_S(s)\]
for every source state $s \in \mathcal{S}$ and every message $m \in \mathcal{M}$.
That is, the \emph{a posteriori} probability that the source state is $s$, given that the message $m$ is observed, is identical to
the \emph{a priori} probability that the source state is $s$.
From Bayes' Theorem follows that
\[p_S(s|m) =\frac{\sum_{\{e \in \mathcal{E}: e(s)=m\}} p_E(e)p_S(s)}{\sum_{\{e \in \mathcal{E}:m\in M(e)\}} p_E(e)p_S(e^{-1}(m))}.\]
This yields:

\begin{lemma}[Stinson]\label{frequency}
An authentication system has perfect secrecy if and only if
\[\sum_{\{e \in \mathcal{E}: e(s)=m\}} p_E(e)=\sum_{\{e \in \mathcal{E}:m\in M(e)\}} p_E(e)p_S(e^{-1}(m))\]
for every source state $s \in \mathcal{S}$ and every message $m \in \mathcal{M}$.
\end{lemma}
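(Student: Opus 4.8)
The plan is to read the claimed identity off the Bayes-theorem formula for $p_S(s\,|\,m)$ displayed just above, together with the definition of perfect secrecy and the standing assumption that $p_S(s)\neq 0$ for every source state. The one structural observation to make first is that in the numerator of that formula the factor $p_S(s)$ does not depend on the summation variable $e$, so the numerator equals $p_S(s)\sum_{\{e\in\mathcal{E}:\,e(s)=m\}}p_E(e)$.

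For the forward implication, I would assume perfect secrecy, i.e. $p_S(s\,|\,m)=p_S(s)$ for all $s\in\mathcal{S}$ and $m\in\mathcal{M}$, substitute the Bayes expression, and use the factorization above to get
\[
p_S(s)=\frac{p_S(s)\sum_{\{e:\,e(s)=m\}}p_E(e)}{\sum_{\{e:\,m\in M(e)\}}p_E(e)\,p_S(e^{-1}(m))}.
\]
Since $p_S(s)>0$ one may cancel it, and cross-multiplying gives precisely the asserted equality. Conversely, if the asserted equality holds for a pair $(s,m)$, then the numerator of the Bayes expression is $p_S(s)$ times its denominator, so the quotient collapses to $p_S(s)$, i.e. $p_S(s\,|\,m)=p_S(s)$; as $(s,m)$ was arbitrary, the system has perfect secrecy.

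The only point that needs care — and essentially the sole obstacle — is the well-definedness of the quotient: the conditional probability $p_S(s\,|\,m)$ is meaningful only when $m$ occurs with positive probability, equivalently when the denominator $\sum_{\{e:\,m\in M(e)\}}p_E(e)\,p_S(e^{-1}(m))=p_M(m)$ is nonzero, and the cancellation of $p_S(s)$ relies on the standing assumption that each source state has nonzero probability. I would therefore phrase (as is customary) the equivalence for messages $m$ with $p_M(m)>0$, and remark that the term $e^{-1}(m)$ in the sum is unambiguous because the summation ranges only over encoding rules $e$ with $m\in M(e)$. Under these conventions the whole argument is a one-line manipulation of Bayes' formula.
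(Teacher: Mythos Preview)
Your proposal is correct and follows exactly the approach the paper indicates: the paper simply states the Bayes-theorem expression for $p_S(s\mid m)$ and writes ``This yields'' before the lemma, and your write-up spells out precisely that one-line manipulation (factor $p_S(s)$ out of the numerator, cancel using $p_S(s)>0$, and read off the equivalence). Your remarks on the positivity of $p_M(m)$ and the well-definedness of $e^{-1}(m)$ are appropriate caveats that the paper leaves implicit.
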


Therefore, if the encoding rules in a system are used with equal probability, then a given message $m$ occurs with the same frequency in each column of the encoding matrix.


\section{Combinatorial Structures}\label{CombStru}

We give in this section some background material on combinatorial structures that is important for our further purposes. Let us assume that $t \leq k \leq v$ and $\lambda$ are positive integers. 

\begin{definition}
Let $G$ be a finite additive Abelian group of order $v$. A \emph{difference family} DF$(v,k,\lambda)$ over $G$ is a family $\mathcal{F}=\{D_1,\ldots,D_l\}$ of subsets of $G$, satisfying the following properties:
\begin{enumerate}

\item[(i)] $\left| D_i \right|=k$ for all $i$ with $1\leq i \leq l$,

\smallskip

\item[(ii)]   the multiset union
\[\bigcup_{i=1}^l \{x-y : x,y \in D_i,\, x \neq y \}\]
contains every nonzero element of $G$ exactly $\lambda$ times.

\end{enumerate}

\end{definition}

The sets $D_1,\ldots,D_l$ are called \emph{base blocks}. A difference family with a
single base block is called a \emph{difference set}. A DF$(v,k,\lambda)$ with $G$ isomorphic to the cyclic group $C_v$ of order $v$ is called a \emph{cyclic} difference family and denoted by CDF$(v,k,\lambda)$. 
 
\smallskip 
 
We recall the notion of \emph{authentication perpendicular arrays}. These combinatorial structures are generalizations of Latin squares.

\begin{definition}\label{APA}
An \emph{authentication perpendicular array} \linebreak APA$_\lambda(t,k,v)$ is a $\lambda  {v \choose t} \times k$ array, $A$, of $v$ symbols, which satisfies the following properties:

\begin{enumerate}

\item[(i)] every row of $A$ contains $k$ distinct symbols,

\smallskip

\item[(ii)] for any $t$ columns of $A$, and for any $t$ distinct symbols, there are precisely $\lambda$ rows $r$ of $A$ such that the $t$ given symbols all occur in row $r$ in the given $t$ columns,

\smallskip

\item[(iii)] for any $s \leq t-1$ and for any $s+1$ distinct symbols $\{x_i\}_{i=1}^{s+1}$, it holds that among all the rows of $A$ that contain all the symbols $\{x_i\}_{i=1}^{s+1}$, the $s$ symbols $\{x_i\}_{i=1}^{s}$ occur in all possible subsets of $s$ columns equally often.

\end{enumerate}

\end{definition}

We present a simple example (due to van Rees, cf.~\cite{Stin93}):

\begin{example}
A $55 \times 3$ array $A$ can be constructed by developing the five rows
\[\begin{array}{ccc}
  0 & 1 & 2 \\
  0 & 9 & 7 \\
  0 & 3 & 6 \\
  0 & 4 & 8 \\
  0 & 5 & 10
\end{array}\]
modulo 11. Every pair $\{x_1,x_2\}$ occurs in three rows of $A$. Within these three rows, $x_1$ occurs once in each of the three columns, as does $x_2$. This gives an \linebreak APA$_1(2,3,11)$.
\end{example}

We recall furthermore the definition of \emph{combinatorial \mbox{$t$-designs}}.

\begin{definition}\label{Des}
A \mbox{$t$-$(v,k,\lambda)$} \emph{design} $\mathcal{D}$ is a pair \mbox{$(X,\mathcal{B})$}, which satisfies the following properties:

\begin{enumerate}

\item[(i)] $X$ is a set of $v$ elements, called \emph{points},

\smallskip

\item[(ii)] $\mathcal{B}$ is a family of \mbox{$k$-subsets} of $X$, called \emph{blocks},

\smallskip

\item[(iii)] every \mbox{$t$-subset} of $X$ is contained in exactly $\lambda$ blocks.

\end{enumerate}

\end{definition}

We will denote points by lower-case and blocks by upper-case Latin letters.
Via convention, let $b:=\left| \mathcal{B} \right|$ denote the number of blocks.
Throughout this work, `repeated blocks' are not allowed, that is, the same \mbox{$k$-subset}
of points may not occur twice as a block. If $t<k<v$ holds, then we speak of a \emph{non-trivial} \mbox{$t$-design}.
For historical reasons, a \mbox{$t$-$(v,k,\lambda)$ design} with
$\lambda =1$ is called a \emph{Steiner \mbox{$t$-design}} (sometimes also a \emph{Steiner system}).
If $\mathcal{D}=(X,\mathcal{B})$ is a \mbox{$t$-$(v,k,\lambda)$}
design with $t \geq 2$, and $x \in X$ arbitrary, then the
\emph{derived design} with respect to $x$ is
\mbox{$\mathcal{D}_x=(X_x,\mathcal{B}_x)$}, where $X_x = X
\backslash \{x\}$, \mbox{$\mathcal{B}_x=\{B \backslash \{x\} : x
\in B \in \mathcal{B}\}$}. In this case, $\mathcal{D}$ is also
called an \emph{extension} of $\mathcal{D}_x$.
Obviously, $\mathcal{D}_x$ is a $(t-1)$-\linebreak\mbox{$(v-1,k-1,\lambda)$}
design.

For the existence of \mbox{$t$-designs}, basic necessary
conditions can be obtained via elementary counting arguments (see,
for instance,~\cite{BJL1999}):

\begin{lemma}\label{s-design}
Let $\mathcal{D}=(X,\mathcal{B})$ be a \mbox{$t$-$(v,k,\lambda)$}
design, and for a positive integer $s \leq t$, let $S \subseteq X$
with $\left|S\right|=s$. Then the number of blocks containing
each element of $S$ is given by
\[\lambda_s = \lambda \frac{{v-s \choose t-s}}{{k-s \choose t-s}}.\]
In particular, for $t\geq 2$, a \mbox{$t$-$(v,k,\lambda)$} design is
also an \mbox{$s$-$(v,k,\lambda_s)$} design.
\end{lemma}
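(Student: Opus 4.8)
The statement to prove is Lemma~\ref{s-design}, the standard counting identity for $t$-designs. The plan is to fix a subset $S \subseteq X$ with $|S| = s$ and count, in two different ways, the number of pairs $(T, B)$ where $T$ is a $t$-subset of $X$ containing $S$, and $B \in \mathcal{B}$ is a block with $T \subseteq B$. Let $\lambda_s$ denote the (as yet unknown) number of blocks containing the fixed set $S$; the first task is to observe that by the full symmetry of the design's defining property this number depends only on $s$, not on the particular choice of $S$ --- every $t$-subset sits in exactly $\lambda$ blocks, so a routine induction downward from $t$ shows the count of blocks through any $s$-subset is a constant.

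For the double count: on one hand, a $t$-subset $T \supseteq S$ is obtained by adjoining $t - s$ further points chosen from the remaining $v - s$ points of $X$, so there are $\binom{v-s}{t-s}$ choices of $T$, and each such $T$ lies in exactly $\lambda$ blocks by Definition~\ref{Des}(iii); hence the number of pairs is $\lambda \binom{v-s}{t-s}$. On the other hand, fix a block $B$ containing $S$; since $|B| = k$ and $S \subseteq B$, the $t$-subsets $T$ with $S \subseteq T \subseteq B$ are obtained by choosing the remaining $t - s$ elements of $T$ from the $k - s$ points of $B \setminus S$, giving $\binom{k-s}{t-s}$ such $T$ per block. Summing over the $\lambda_s$ blocks through $S$ yields $\lambda_s \binom{k-s}{t-s}$ pairs. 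Equating the two expressions gives
\[
\lambda_s \binom{k-s}{t-s} = \lambda \binom{v-s}{t-s},
\]
and since $t \leq k$ the binomial coefficient $\binom{k-s}{t-s}$ is nonzero, so we may solve for $\lambda_s$ to obtain the stated formula.

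For the final assertion, take $s = 2$ (assuming $t \geq 2$): the displayed formula shows every $2$-subset lies in exactly $\lambda_2$ blocks, which together with Definition~\ref{Des}(i)--(ii) is precisely the statement that $\mathcal{D}$ is a $2$-$(v,k,\lambda_2)$ design; more generally the same reasoning for any $s \leq t$ gives an $s$-$(v,k,\lambda_s)$ design. The only point requiring any care --- and the closest thing to an obstacle --- is justifying that $\lambda_s$ is well-defined, i.e. independent of the chosen $S$; but this is immediate from the double-count itself, since the right-hand side $\lambda\binom{v-s}{t-s}$ does not reference $S$, so the argument simultaneously proves well-definedness and the value. Everything else is elementary binomial bookkeeping.
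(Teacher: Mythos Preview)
Your proof is correct and is exactly the standard double-counting argument for this classical identity. The paper itself does not supply a proof of this lemma --- it merely remarks that the result follows from ``elementary counting arguments'' and cites~\cite{BJL1999} --- so your argument is precisely the kind of proof the paper defers to the literature.
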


It is customary to set $r:= \lambda_1$ denoting the
number of blocks containing a given point. It follows

\begin{lemma}\label{Comb_t=5}
Let $\mathcal{D}=(X,\mathcal{B})$ be a \mbox{$t$-$(v,k,\lambda)$}
design. Then the following holds:
\begin{enumerate}

\item[{\em(a)}] $bk = vr.$

\smallskip

\item[\em{(b)}] $\displaystyle{{v \choose t}  \lambda = b  {k \choose t}.}$

\smallskip

\item[\em{(c)}] $r(k-1)=\lambda_2(v-1)$ for $t \geq 2$.

\end{enumerate}
\end{lemma}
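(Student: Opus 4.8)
The plan is to obtain all three identities by double counting appropriately chosen incidence pairs, using Lemma~\ref{s-design} to express $b$, $r=\lambda_1$, and $\lambda_2$ in a uniform way. I would present~(b) first, then~(a), then~(c).

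For part~(b), I would count the pairs $(T,B)$ with $T$ a $t$-subset of $X$, $B\in\mathcal{B}$, and $T\subseteq B$. Property~(iii) of Definition~\ref{Des} says each of the $\binom{v}{t}$ $t$-subsets lies in exactly $\lambda$ blocks, which gives $\binom{v}{t}\lambda$ such pairs; on the other hand, each of the $b$ blocks is a $k$-set and so contains exactly $\binom{k}{t}$ $t$-subsets, which gives $b\binom{k}{t}$ pairs. Equating the two counts yields~(b) (this is essentially the $s=0$ instance of the counting in Lemma~\ref{s-design}).

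For part~(a), I would count flags $(x,B)$ with $x\in X$, $B\in\mathcal{B}$, $x\in B$. Summing over blocks gives $bk$, while summing over points gives $vr$, since by Lemma~\ref{s-design} every point lies in exactly $r=\lambda_1$ blocks; hence $bk=vr$. Alternatively, one can derive~(a) algebraically from~(b) together with $\lambda_1=\lambda\binom{v-1}{t-1}/\binom{k-1}{t-1}$, using the identities $v\binom{v-1}{t-1}=t\binom{v}{t}$ and $k\binom{k-1}{t-1}=t\binom{k}{t}$. For part~(c), the hypothesis $t\geq 2$ is needed so that $\mathcal{D}$ is also a $2$-design and $\lambda_2$ is defined. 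Fixing a point $x$, I would count the pairs $(y,B)$ with $y\in X\setminus\{x\}$, $B\in\mathcal{B}$, and $\{x,y\}\subseteq B$: grouping by $y$, each of the $v-1$ points $y\neq x$ lies together with $x$ in exactly $\lambda_2$ blocks, giving $(v-1)\lambda_2$ pairs; grouping by $B$, the point $x$ lies in $r$ blocks, each contributing $k-1$ further points $y$, giving $r(k-1)$ pairs. Equating gives~(c).

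There is no real obstacle here: each part is a two-line double-counting argument. The only points requiring a little care are keeping the binomial-coefficient bookkeeping straight in the algebraic version of~(a), and noting explicitly that~(c) presupposes $t\geq 2$ (for $t=1$ the quantity $\lambda_2$ has no meaning).
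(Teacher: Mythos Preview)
Your proposal is correct: the three double-counting arguments you sketch are the standard ones and are precisely what the paper has in mind. The paper does not spell out a proof of this lemma at all; it simply writes ``It follows'' after Lemma~\ref{s-design}, so your treatment is in the same spirit (deriving the identities from the basic counting encoded in Lemma~\ref{s-design}) but more explicit than anything in the paper itself.
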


The next result (cf.~\cite{Stin90}) uses \mbox{$t$-designs} in order to construct authentication perpendicular arrays. Further similar recursive constructions have been obtained in~\cite{Trun95}.

\begin{theorem}[Stinson--Teirlinck]\label{tdes_APA}
Suppose there is a \mbox{$t$-$(v,k,\lambda)$} design and an authentication perpendicular array APA$_{\lambda^\prime}(t,k,k)$, then there is an APA$_{\lambda \cdot \lambda^\prime}(t,k,v)$.
\end{theorem}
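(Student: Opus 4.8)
The plan is to build the desired array by the natural ``blow-up'' construction: substitute a relabelled copy of the given APA on $k$ symbols into every block of the $t$-design. First I would fix a $t$-$(v,k,\lambda)$ design $\mathcal{D}=(X,\mathcal{B})$ and, by Lemma~\ref{Comb_t=5}(b), record that $|\mathcal{B}| = \lambda\binom{v}{t}/\binom{k}{t}$. I would also fix an APA$_{\lambda'}(t,k,k)$, call it $A'$, whose symbol set I take to be $\{1,\dots,k\}$; note that since $A'$ has exactly $k$ symbols while every row of $A'$ has $k$ distinct entries, every row of $A'$ is in fact a permutation of $\{1,\dots,k\}$. For each block $B\in\mathcal{B}$ I choose a bijection $\phi_B\colon\{1,\dots,k\}\to B$ and let $A_B$ be the array obtained from $A'$ by replacing each entry $j$ by $\phi_B(j)$; columns are left untouched. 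Finally let $A$ be the array obtained by vertically stacking the $A_B$ for all $B\in\mathcal{B}$. Counting rows, $A$ has $|\mathcal{B}|\cdot\lambda'\binom{k}{t} = \lambda\lambda'\binom{v}{t}$ rows over the $v$ symbols of $X$, which is exactly the size prescribed for an APA$_{\lambda\lambda'}(t,k,v)$.

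It then remains to verify the three defining properties of Definition~\ref{APA}. Property (i) is immediate: each row of $A$ is a row of some $A_B$, whose entries are the images under the injection $\phi_B$ of the $k$ distinct entries of a row of $A'$, hence $k$ distinct symbols. For property (ii), fix any $t$ columns and any $t$ distinct symbols $\{y_1,\dots,y_t\}\subseteq X$. A row coming from block $B$ can only contain symbols of $B$, so it can contribute only if $\{y_1,\dots,y_t\}\subseteq B$; by property (iii) of Definition~\ref{Des} there are exactly $\lambda$ such blocks. Within each such $A_B$, the symbols $y_i$ correspond under $\phi_B^{-1}$ to $t$ distinct symbols of $\{1,\dots,k\}$, and property (ii) for $A'$ produces exactly $\lambda'$ rows of $A'$ carrying them in the prescribed $t$ columns; applying $\phi_B$ gives $\lambda'$ rows of $A_B$ with the $y_i$ in those columns. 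Summing over the $\lambda$ blocks yields the required total of $\lambda\lambda'$.

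The only property needing a little care is (iii). Fix $s\le t-1$ and $s+1$ distinct symbols $\{x_i\}_{i=1}^{s+1}\subseteq X$. A row of $A$ containing all of $x_1,\dots,x_{s+1}$ must come from a block $B$ with $\{x_i\}_{i=1}^{s+1}\subseteq B$; there are $\lambda_{s+1}$ such blocks in the notation of Lemma~\ref{s-design}, though the exact value plays no role. Now comes the key observation: for each such $B$, \emph{every} row of $A_B$ already contains all of $x_1,\dots,x_{s+1}$, since $A_B$ is a permutation array on the symbol set $B\supseteq\{x_i\}_{i=1}^{s+1}$; hence the rows of $A$ through $\{x_i\}_{i=1}^{s+1}$ are precisely the rows of these copies $A_B$. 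Property (iii) applied to $A'$ (and transported by the column-preserving relabelling $\phi_B$) says that within each $A_B$ the $s$ symbols $x_1,\dots,x_s$ are distributed over the $\binom{k}{s}$ possible $s$-subsets of columns equally often; call this common count $\mu_B$. Summing the equal contributions $\mu_B$ over all blocks $B\supseteq\{x_i\}_{i=1}^{s+1}$ gives the same total $\sum_B \mu_B$ for every $s$-subset of columns, which is exactly property (iii) for $A$.

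I expect this last bookkeeping step to be the main, and only mild, obstacle: one must notice that the ``equally often'' requirement need only be checked inside each substituted copy and is then automatically preserved under the vertical sum over blocks, and that on $k$ symbols an APA row is necessarily a full permutation. Everything else reduces to the elementary $t$-design counting already recorded in Lemmas~\ref{s-design} and~\ref{Comb_t=5}. This completes the construction of the APA$_{\lambda\lambda'}(t,k,v)$. \qed
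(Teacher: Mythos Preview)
The paper does not supply its own proof of Theorem~\ref{tdes_APA}; it is quoted as a known result from~\cite{Stin90} (see also~\cite{Trun95}), so there is no in-paper argument to compare against. That said, your construction is exactly the standard ``blow-up'' proof attributed to Stinson and Teirlinck: replace each block of the design by a relabelled copy of the small APA and verify the three axioms. Your row count and the checks of properties~(i) and~(ii) are clean and correct.

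Your handling of property~(iii) is also correct, and you have put your finger on the one genuinely nontrivial point: because the inner array $A'$ is on exactly $k$ symbols, every row of each $A_B$ is a full permutation of $B$, so the rows of $A$ through $\{x_i\}_{i=1}^{s+1}$ are precisely \emph{all} rows of the copies $A_B$ with $B\supseteq\{x_i\}_{i=1}^{s+1}$. Property~(iii) of $A'$, transported by $\phi_B$, then gives uniformity over the $\binom{k}{s}$ column subsets within each such copy, and summing preserves uniformity. One could even note that the common per-block count is $\mu_B=\lambda'\binom{k}{t}/\binom{k}{s}$, independent of $B$, though as you observe this exact value is not needed. Nothing is missing; this is the intended argument.
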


Concerning the existence of \mbox{$t$-designs}, a seminal result by Teirlinck~\cite{Teir1987} shows that there exist non-trivial \mbox{$t$-designs} for all possible values of $t$.

\begin{theorem}[Teirlinck]\label{Teirl}
For given integers $t$ and $v$ with $v \equiv t \; (\emph{mod} \; (t+1)!^{2t+1})$ and $v \geq t+1 >0$, there exists a
\mbox{$t$-$(v,t+1,(t+1)!^{2t+1})$} design.
\end{theorem}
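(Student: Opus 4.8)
The statement to prove is Teirlinck's theorem: for integers $t$ and $v$ with $v \equiv t \pmod{(t+1)!^{2t+1}}$ and $v \geq t+1 > 0$, there exists a $t$-$(v, t+1, (t+1)!^{2t+1})$ design. The strategy is to build such designs from a small number of explicit ``seed'' structures by means of two algebraic operations that preserve the $t$-design property: taking disjoint unions of block collections (which adds the $\lambda$-parameters), and a multiplication-type construction that produces a $t$-design on a larger point set from $t$-designs on smaller point sets. The key point is that with $k = t+1$, the relevant index $\lambda = (t+1)!^{2t+1}$ is a large but fixed multiple that makes enough ``room'' for these recursions to close up.

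First I would set $\ell = (t+1)!$, so that the target index is $\ell^{2t+1}$, and reformulate the problem in terms of the ``$t$-wise balanced'' closure: let $N_t$ be the set of integers $v \geq t+1$ for which a $t$-$(v, t+1, \ell^{2t+1})$ design exists. The goal is to show $N_t$ contains every $v \equiv t \pmod{\ell^{2t+1}}$. The plan is an induction on $t$. The base case $t = 1$ is trivial (a $1$-$(v,2,\lambda)$ design is just a $\lambda$-regular multigraph on $v$ vertices, which exists whenever $\lambda v$ is even, and $\ell^{2t+1} = 2^3 = 8$ makes this automatic). For the inductive step, assume the result for $t-1$. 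The core tool is a construction that takes a $(t-1)$-design structure together with combinatorial ``ingredients'' on $\{1,\dots,t+1\}$ coordinates — essentially large sets of permutations or of $(t+1)$-subsets arranged so that every ordered (or unordered) $t$-tuple is covered the right number of times — and glues them: given a $(t-1)$-$(v', t, \mu)$ design and suitable local data, one obtains a $t$-$(v' + c, t+1, \mu')$ design for appropriate constants. Iterating this while controlling the arithmetic progressions modulo $\ell^{2t+1}$ yields all residues $\equiv t$.

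The heart of the argument, and the main obstacle, is establishing the existence of the finite set of ``ingredient'' structures on a bounded number of points — the pieces that let you increment $v$ by a controlled amount while staying a genuine $t$-design (no repeated blocks) with exactly the prescribed index. This is where the factor $(t+1)!^{2t+1}$ comes from: one needs $t$-homogeneous-like families of $(t+1)$-sets whose existence is forced by a counting/averaging argument over the full symmetric group $S_{t+1}$, and the exponent $2t+1$ is precisely what is required to absorb all the divisibility obstructions that arise when passing from ordered tuples to unordered blocks and when combining several recursive steps. I would isolate this as a lemma: ``for every $t$ there exist finitely many $t$-$(v_i, t+1, \ell^{2t+1})$ designs with $v_i$ spanning a complete residue system's worth of increments,'' proved by a direct but intricate construction (van Rees–style developments mod a prime, or explicit orbit counting under $S_{t+1}$). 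Once that lemma is in hand, the disjoint-union operation handles the index bookkeeping, and a Chinese-remainder/Frobenius-type argument on the achievable values of $v$ finishes the induction. I expect the permutation-counting lemma — ensuring the local ingredients exist with the exact index and no repeated blocks — to be by far the most delicate part; everything else is additive bookkeeping on $\lambda$ and $v$.
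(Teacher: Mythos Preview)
The paper does not prove this statement. Theorem~\ref{Teirl} is quoted from Teirlinck's 1987 paper and used purely as a black box: the only place it reappears is in the two-line proof of Theorem~\ref{mythm2}, where one simply checks that the parameters $t$-$(v,t+1,(t+1)!^{2t+1})$ satisfy the divisibility condition $v\mid b$ required by Theorem~\ref{mythm1}. There is nothing in the paper to compare your proposal against.

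As for the proposal itself: it gestures in the direction of Teirlinck's actual argument (recursive constructions, the role of $(t+1)!$ in absorbing divisibility obstructions, building up from small ingredients), but it remains a plan rather than a proof. The decisive content of Teirlinck's paper is precisely the part you flag as ``by far the most delicate'' and then do not carry out: the explicit construction of the finite stock of ingredient designs (in Teirlinck's language, large sets of mutually disjoint $t$-designs and the associated recursive extension lemmas) that make the induction close. Your disjoint-union and Chinese-remainder bookkeeping is the easy shell; without the ingredient lemma actually established, the proposal has a genuine gap at exactly the point where all the work lies. If you want to supply a self-contained proof rather than cite Teirlinck, you would need to reproduce that construction in full, which is an order of magnitude more involved than anything in the present paper.
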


Teirlinck's recursive construction methods are constructive. However, for a given $t$, they result in \mbox{$t$-designs}
with extremely large values for $v$ and $\lambda$. For example, the smallest parameters for the case $t=7$ are \mbox{$7$-$(40320^{15} + 7,8,40320^{15})$}. Until now no non-trivial \linebreak Steiner \mbox{$t$-design} with $t>5$ has been found. Highly regular examples have been proven not to exist (cf., e.g.,~\cite{Hu2008}). We refer the reader to~\cite{BJL1999,crc06} for encyclopedic accounts of key results in combinatorial design theory. Various connections of \mbox{$t$-designs} with coding and information theory can be found in a recent survey~\cite{Hu2009} (with many additional references therein).


\section{Constructions using Combinatorial Structures}\label{Known}

\subsection{Equiprobable Source Probability Distribution}

When the source states are known to be independent and equiprobable, authentication systems which are \linebreak $(t-1)$-fold secure against spoofing can be constructed via \mbox{$t$-designs} (cf.~\cite{Stin90,Sch86,DeS88}).

\begin{theorem}[De\,Soete--Sch\"{o}bi--Stinson]\label{general}
Suppose \linebreak there is a \mbox{$t$-$(v,k,\lambda)$} design. Then there is an authentication system for $k$ equiprobable source states, having $v$ messages and $\lambda  {v \choose t}/{k \choose t}$ encoding rules, that is $(t-1)$-fold secure against spoofing. Conversely, if there is an authentication system for $k$ equiprobable source states, having  $v$ messages and ${v \choose t}/{k \choose t}$ encoding rules, that is $(t-1)$-fold secure against spoofing, then there is a Steiner \mbox{$t$-$(v,k,1)$} design.
\end{theorem}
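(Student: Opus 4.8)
The plan is to prove the forward direction by a direct construction, and the converse by a counting argument that forces $\lambda=1$.

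\medskip

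\noindent\emph{Forward direction.} Suppose $\mathcal{D}=(X,\mathcal{B})$ is a $t$-$(v,k,\lambda)$ design. I would take the message set $\mathcal{M}:=X$ (so $|\mathcal{M}|=v$), and use the blocks as encoding rules: for each block $B\in\mathcal{B}$ I need to turn $B$ into an injective map from the $k$ source states into $X$. Since $|B|=k$, I fix for each block an arbitrary ordering of its $k$ elements; this gives an encoding rule $e_B$ with $M(e_B)=B$. The number of encoding rules is then $b=|\mathcal{B}|=\lambda\binom{v}{t}/\binom{k}{t}$ by Lemma~\ref{Comb_t=5}(b). Using all encoding rules with equal probability $1/b$ and the equiprobable source distribution $p_S(s)=1/k$, I then have to verify two things: (1) the system has $P_{d_i}=(k-i)/(v-i)$ for all $0\le i\le t-1$, and (2) — although not asserted in the statement, it is implicit in the surrounding discussion — perfect secrecy, via Lemma~\ref{frequency}. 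For the deception probabilities: after the opponent has seen $i$ messages (which, because encoding rules are injective, correspond to $i$ distinct points lying in a common block), the optimal strategy is to insert a message $m'$ maximizing the number of surviving encoding rules containing it. By Lemma~\ref{s-design}, any $i$ points lie in $\lambda_i$ blocks and any $i+1$ points in $\lambda_{i+1}$ blocks, and the ratio $\lambda_{i+1}/\lambda_i=(v-i)/(k-i)$ is independent of which points were chosen — this uniformity is exactly what makes every choice of $m'$ equally good and pins $P_{d_i}$ to $(k-i)/(v-i)$ rather than merely bounding it. (One should note the ordering within blocks does not affect this count, since the APA/perpendicularity conditions are not needed here; only the design property matters.) Perfect secrecy would follow because, with equiprobable keys, each point occurs equally often down each column of the encoding matrix — here I would need to choose the block orderings more carefully, or invoke a symmetry/averaging argument, to ensure each message appears in each column with frequency $r/(bk)\cdot\text{(something)}$; this is the one place where a naive arbitrary ordering may not suffice and a cyclic or otherwise structured choice of orderings is the clean fix.

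\medskip

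\noindent\emph{Converse direction.} Now suppose an authentication system for $k$ equiprobable source states with $v$ messages and exactly $\binom{v}{t}/\binom{k}{t}$ encoding rules is $(t-1)$-fold secure against spoofing. I would argue that the family $\{M(e):e\in\mathcal{E}\}$ of valid-message sets forms a Steiner $t$-$(v,k,1)$ design on point set $\mathcal{M}$. First, $t$-fold-minus-one security forces, for every set of $i\le t-1$ messages that can jointly occur, that the number of encoding rules consistent with them, divided by the number consistent with any extension by one more message, equals $(v-i)/(k-i)$ exactly; iterating down from $i=0$ shows that every set of $j$ jointly-realizable messages lies in the same number $c_j$ of sets $M(e)$, where $c_0=b=\binom{v}{t}/\binom{k}{t}$ and $c_{j+1}/c_j=(k-j)/(v-j)$, giving $c_t=1$. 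Hence every $t$-subset of messages that occurs at all occurs in exactly one valid-message set; a short argument (using that $c_1>0$, i.e. every message is used, and connectivity) shows \emph{every} $t$-subset occurs, so $\{M(e)\}$ is a $t$-$(v,k,1)$ design — and since $b$ already matches the count $\binom{v}{t}/\binom{k}{t}=\binom{v}{t}/\binom{k}{t}$ for $\lambda=1$, there are no repeated blocks.

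\medskip

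\noindent\emph{Main obstacle.} The delicate point is the converse: turning the probabilistic statement ``$P_{d_i}=(k-i)/(v-i)$'' into the combinatorial statement ``every small jointly-realizable set lies in a constant number of blocks.'' The inequality $P_{d_i}\ge(k-i)/(v-i)$ is an averaging bound, and equality must be shown to force uniformity of the relevant block-intersection counts — not just on average but for every configuration — and to propagate this uniformity down all $t$ levels. I expect this to require careful bookkeeping with Bayes-type conditional probabilities and the equiprobability hypotheses (both on sources and, as a consequence of optimality plus Lemma~\ref{frequency}/Shannon, on keys), rather than any deep new idea; the forward direction is essentially immediate from Lemma~\ref{s-design} and Lemma~\ref{Comb_t=5}(b) modulo the block-ordering care needed for perfect secrecy.
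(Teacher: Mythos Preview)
The paper does not prove this theorem at all: it is quoted as a known result due to De\,Soete, Sch\"{o}bi and Stinson (with citations to~\cite{Stin90,Sch86,DeS88}) and is used only as input to the paper's own Theorem~\ref{mythm1}. So there is no ``paper's proof'' to compare against; your proposal can only be assessed on its own merits.

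On those merits, the forward direction is essentially right but you have added work that does not belong here. The statement asserts only $(t-1)$-fold security against spoofing, \emph{not} perfect secrecy; the perfect-secrecy claim, together with the block-ordering issue you worry about, is precisely the content of the paper's Theorem~\ref{mythm1} (and is exactly where the extra divisibility hypothesis $v\mid b$ enters). For the present theorem any ordering of each block suffices. Also note a slip: you write $\lambda_{i+1}/\lambda_i=(v-i)/(k-i)$, but from Lemma~\ref{s-design} one gets $\lambda_{i+1}/\lambda_i=(k-i)/(v-i)$, which is the deception probability itself; your conclusion is still correct once this is fixed.

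For the converse your outline is the standard one and is fine in spirit. The genuine work --- which you correctly flag as the main obstacle --- is that equality $P_{d_i}=(k-i)/(v-i)$ is an equality in an averaging (max) bound, and one must argue that this forces the counts $|\{e:S\subseteq M(e)\}|$ to be constant over all realizable $i$-sets $S$, for each $i\le t$, and then that every $t$-subset of $\mathcal{M}$ is in fact realizable. Your recursion $c_{j+1}/c_j=(k-j)/(v-j)$ with $c_0=b=\binom{v}{t}/\binom{k}{t}$ giving $c_t=1$ is the right bookkeeping, but the ``short argument'' that every $t$-subset occurs deserves to be written out: one shows inductively that every $j$-subset lies in $c_j>0$ encoding rules, starting from the fact that $c_0=b$ counts all rules and using that the opponent may choose \emph{any} $m'\in\mathcal{M}$ at each stage.
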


With a focus on optimal constructions, the above result has been modified in~\cite{Stin90} and generalized recently in~\cite{Hub2009} to include also the aspect of perfect secrecy. In particular, the first infinite classes of optimal perfect secrecy systems that achieve twofold security have been constructed in~\cite{Hub2009} as well as further optimal systems that offer $3$- and \mbox{$4$-fold} security against spoofing. We give in Table~\ref{t-des} all presently known optimal perfect secrecy systems that are \mbox{$t$-fold} secure against spoofing with $t \geq 1$ under equiprobable source probability distributions.

\begin{table}[b!]
\renewcommand{\arraystretch}{1.3}
\caption{Optimal perfect secrecy systems from Steiner designs that are \mbox{$t$-fold} secure against spoofing attacks}\label{t-des}

\begin{center}

\begin{tabular}{|c||c c c| c|c|}
  \hline
  $t$ & $k$ & $v$ & $b=b_{\mbox{\tiny{opt}}}$ &  \mbox{Ref.} \\
  \hline \hline
1   & $q+1$ & $\frac{q^{d+1}-1}{q-1}$ & $\frac{v(v-1)}{k(k-1)}$ &  \cite{Stin90}\\
            & $q$  {{\mbox{prime power}}} &  $d \geq 2$ {{\mbox{even}}} & & \\
   \hline
1   & $3$ & $v \equiv 1$ ({{\mbox{mod}}} $6$) & $\frac{v(v-1)}{6}$ & \cite{Hub2009}\\
   \hline
1  & $4$ & $v \equiv 1$ ({{\mbox{mod}}} $12$) & $\frac{v(v-1)}{12}$ & \cite{Hub2009}\\
   \hline
1   & $5$ & $v \equiv 1$ ({{\mbox{mod}}} $20$) & $\frac{v(v-1)}{20}$ & \cite{Hub2009}\\
   \hline
2   & $q+1$ & $q^d+1$ & $\frac{v(v-1)(v-2)}{k(k-1)(k-2)}$ & \cite{Hub2009}\\
            & $q$ {{\mbox{prime power}}} &  $d \geq 2$ {{\mbox{even}}} & & \\
   \hline
2  & $4$ & $v \equiv 2, 10$ ({{\mbox{mod}}} $24$) & $\frac{v(v-1)(v-2)}{24}$ &  \cite{Hub2009}\\
   \hline
2 & 5  & 26 & 260 &   \cite{Hub2009} \\
   \hline
   & 5  & 11 & 66 &   \cite{Hub2009} \\
   & 7  & 23 & 253  & \cite{Hub2009}\\
   & 5  & 23 & 1{,}771  & \cite{Hub2009}\\
   & 5  & 47 & 35{,}673  & \cite{Hub2009}\\
3 & 5  & 83 & 367{,}524  &  \cite{Hub2009}\\
   & 5  & 71 & 194{,}327 &  \cite{Hub2009}\\
   & 5  & 107 & 1{,}032{,}122 &  \cite{Hub2009}\\
   & 5  & 131 & 2{,}343{,}328 &  \cite{Hub2009}\\
   & 5  & 167 & 6{,}251{,}311 &   \cite{Hub2009}\\
   & 5  & 243 & 28{,}344{,}492 &  \cite{Hub2009} \\
   \hline
   & 6  & 12 & 132 &    \cite{Hub2009}\\
4  & 6  & 84 & 5{,}145{,}336  &  \cite{Hub2009}\\
   & 6  & 244 & 1{,}152{,}676{,}008  &  \cite{Hub2009} \\
  \hline
\end{tabular}
\end{center}
\end{table}

\subsection{Arbitrary Source Probability Distribution}\label{arbitrary}

For arbitrary source probability distributions, basically two construction methods have been developed for perfect secrecy systems that offer security against spoofing attacks (cf.~\cite{BE94,Bier06,Stin90,Trun95}). These constructions inherently require larger numbers of encoding rules for achieving the same level of security. One of the two methods with the smaller number of encoding rules requires $\lambda  {v \choose t}$ encoding rules when we want the perfect secrecy systems with $k$ source states and $v$ messages to be $(t-1)$-fold secure against spoofing (indeed, these systems achieve perfect $t$-fold secrecy), and is based on authentication perpendicular arrays APA$_{\lambda}(t,k,v)$, cf.~\cite[Thm.\,3.3]{Stin90}. For $t\geq 6$, there are --- apart from two infinite series with extremely large values of $\lambda$ --- only a very small number of authentication perpendicular arrays APA$_{\lambda}(t,k,v)$ known. These have been constructed via Theorem~\ref{tdes_APA} or similar results using \mbox{$t$-designs}. All these APA$_{\lambda}(t,k,v)$ have $t\leq 8$, and for $t=6$ all have $\lambda \geq 24$, for $t=7$ all have $\lambda \geq 70$, and for $t=8$ all have $\lambda \geq 280$. The two infinite series were constructed by Tran van Trung~\cite{Trun95} and have parameters $v \geq k$, $k=2t$ resp. $2t+1$, and $\lambda=t!^2 {v-t \choose t}/6!$ resp. $(t+1)t!^2 {v-t \choose t+1}/6!$.


\section{A General Construction Method \& Theoretical Point of View}\label{NewMethod}

We present a construction method for designing perfect secrecy systems that provide immunity to spoofing attacks under equiprobable source probability distributions. 

\begin{theorem}\label{mythm1}
Suppose there is a \mbox{$t$-$(v,k,\lambda)$} design, where $v$ divides the number of blocks $b=\lambda  {v \choose t}/{k \choose t}$. Then there is a perfect secrecy system for $k$ equiprobable source states, having $v$ messages and $b$ encoding rules, that is $(t-1)$-fold secure against spoofing. Moreover, the system is optimal if and only if $\lambda=1$.
\end{theorem}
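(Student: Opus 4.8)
The plan is to build the encoding matrix of the desired system directly from the given $t$-$(v,k,\lambda)$ design $\mathcal{D}=(X,\mathcal{B})$, and then to verify successively: perfect secrecy, $(t-1)$-fold security against spoofing, and the optimality criterion. The starting point is Theorem~\ref{general}: from $\mathcal{D}$ one already obtains an authentication system for $k$ equiprobable source states with $v$ messages and $b=\lambda\binom{v}{t}/\binom{k}{t}$ encoding rules that is $(t-1)$-fold secure against spoofing. In that construction the messages are the points of $X$, each block $B\in\mathcal{B}$ is turned into encoding rules by ordering its $k$ elements, and one uses APA-type structures (or, for the impersonation/substitution bounds, just the design axioms of Lemma~\ref{s-design}) to fix the orderings so that the deception probabilities meet Massey's bound $P_{d_i}=(k-i)/(v-i)$ for $0\le i\le t-1$. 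So the security-against-spoofing half of the statement is essentially inherited; the genuinely new ingredient here is \emph{perfect secrecy}, and this is exactly where the hypothesis ``$v$ divides $b$'' is used.

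First I would make the encoding rules equiprobable, so that by Lemma~\ref{frequency} (Stinson) perfect secrecy reduces to the purely combinatorial condition that every message $m\in\mathcal{M}$ occurs the same number of times in each column of the encoding matrix — equivalently, for each fixed source state $s$, the message $m$ is the value $e(s)$ for the same number of encoding rules $e$, independently of $m$. Now each point of $X$ lies in exactly $r=\lambda_1=\lambda\binom{v-1}{t-1}/\binom{k-1}{t-1}$ blocks (Lemma~\ref{s-design}), and in the $A$-code each block contributes one encoding rule per ordering, all $k$ columns being symmetric under the block's orderings; hence the total number of (rule, column) incidences carrying a fixed message $m$ is the same for every $m$. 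What is needed is that these incidences distribute \emph{evenly across the $k$ columns} for each individual $m$. The idea is to choose the orderings of the blocks in a globally balanced way: distribute the orderings of all blocks so that, over the whole family of encoding rules, each point appears in each of the $k$ coordinate positions equally often. Counting the pairs (encoding rule, position) in which a fixed point $m$ appears, this number is $rb'/v$-type quantity where $b'$ is the number of rules built per block; one checks it is an integer precisely because $v\mid b$ (together with the divisibilities already guaranteed by Lemma~\ref{Comb_t=5}), and a cyclic or round-robin assignment of orderings realizes the balance. This yields a column-regular encoding matrix, hence perfect secrecy.

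The optimality clause then follows immediately from Theorem~\ref{thm_mas_sch} (Massey--Sch\"obi): a system that is $(t-1)$-fold secure against spoofing has $b\ge\binom{v}{t}/\binom{k}{t}$, with equality exactly when $b=\binom{v}{t}/\binom{k}{t}$; since our system has $b=\lambda\binom{v}{t}/\binom{k}{t}$, equality holds if and only if $\lambda=1$. Conversely, when $\lambda=1$ the design is a Steiner system and the construction produces an optimal system, matching Theorem~\ref{general}. The main obstacle I anticipate is the middle step: making the choice of block orderings explicit and proving the column-balance condition for \emph{every} message simultaneously while not disturbing the spoofing bounds inherited from Theorem~\ref{general} — i.e.\ showing that one can impose the perfect-secrecy balancing on top of the APA-based ordering constraints. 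The hypothesis $v\mid b$ is the exact arithmetic condition that lets these two requirements be satisfied together, and verifying that the relevant counts are integral (invoking parts (a)--(c) of Lemma~\ref{Comb_t=5}) is the crux of the argument.
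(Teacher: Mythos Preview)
Your overall strategy matches the paper's: invoke Theorem~\ref{general} for $(t-1)$-fold spoofing security, take the encoding rules equiprobable, and then use Lemma~\ref{frequency} to reduce perfect secrecy to a column-balance condition on the encoding matrix, with optimality handled by Theorem~\ref{thm_mas_sch}. That skeleton is correct.

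However, two points deserve correction. First, you misidentify the ``main obstacle''. The spoofing-security statement in Theorem~\ref{general} depends only on the \emph{sets} $M(e)$ of valid messages, i.e.\ on the blocks of the design, and not at all on how each block is ordered into a row of the encoding matrix. No APA-type ordering constraints are needed for the deception bounds $P_{d_i}=(k-i)/(v-i)$; any bijection from $\mathcal{S}$ to $B$ for each block $B$ gives the same spoofing performance. Hence there is no tension to resolve between ``spoofing orderings'' and ``secrecy orderings'' --- you are completely free to choose the orderings for secrecy alone. Your remarks about ``$b'$ rules built per block'' and preserving APA constraints are therefore off target: each block becomes exactly one encoding rule, and the only task is to pick one ordering per block so that every point lands in every column exactly $r/k=b/v$ times.

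Second, and this is the actual gap, you do not establish that such a balanced family of orderings exists. ``A cyclic or round-robin assignment of orderings realizes the balance'' is not a proof; it is not even clear what the cycle would be indexed by. The paper supplies the missing argument: form the bipartite point--block incidence graph on $X\cup\mathcal{B}$, where block-vertices have degree $k$ and point-vertices have degree $r$. Split each point-vertex into $r/k=b/v$ copies of degree $k$ (this is where $v\mid b$, equivalently $k\mid r$, is used), obtaining a $k$-regular bipartite graph. Such a graph admits a proper edge-coloring with $k$ colors; the color of the edge $(x,B)$ then tells you the column in which $x$ is placed in the row corresponding to $B$. This guarantees simultaneously that each row is a bijection and that each point occurs in each column exactly $b/v$ times. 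Once you have this edge-coloring step, the rest of your outline goes through verbatim.
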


\begin{proof}
Let $\mathcal{D}=(X,\mathcal{B})$ be a \mbox{$t$-$(v,k,\lambda)$} design, where  $v$ divides $b=\lambda  {v \choose t}/{k \choose t}$. It follows from Theorem~\ref{general} that the system is \mbox{$(t-1)$}-fold secure against spoofing attacks. Thus, it remains to verify that the system also achieves perfect secrecy when we assume that the encoding rules are used with equal probability. By Lemma~\ref{frequency}, this means that a given message must occur with the same frequency in each column of the resulting encoding matrix. This can be achieved by
ordering every block of $\mathcal{D}$ in such a way that every point occurs in each possible position in precisely $b/v$ blocks.
Since every point occurs in exactly $r=\lambda  {v-1 \choose t-1}/{k-1 \choose t-1}$ blocks in view of Lemma~\ref{Comb_t=5}~(c), necessarily $k$ must divide $r$. By Lemma~\ref{Comb_t=5}~(b), this is equivalent to saying that $v$ divides $b$.
To show that the condition is also sufficient, we may consider the bipartite point-block incidence graph of $\mathcal{D}$ with vertex set $X \cup \mathcal{B}$, where $(x,B)$ defines an edge if and only if $x \in B$  for $x \in X$ and $B \in \mathcal{B}$. An ordering on each block of $\mathcal{D}$ can be obtained via an edge-coloring of this graph using $k$ colors in such a way that each vertex $B \in \mathcal{B}$ is adjacent to one edge of each color, and each vertex $x \in X$ is adjacent to $b/v$ edges of each color. Technically, this can be achieved by first splitting up each vertex $x$ into $b/v$ copies, each having degree $k$, and then by finding an appropriate edge-coloring of the resulting $k$-regular bipartite graph using $k$ colors. We can now take the ordered blocks as encoding rules, each used with equal probability. Moreover, optimality occurs if and only if $\lambda=1$ in view of Theorem~\ref{thm_mas_sch}.\qed
\end{proof}

We note that the special case when $\lambda=1$ has been treated in~\cite[Thm.\,6]{Hub2009}.

Using Theorem~\ref{Teirl}, we may constructively generate systems that can reach an arbitrary high level of security against spoofing.

\begin{theorem}\label{mythm2}
For all integers $t$ and $v$ with $v \equiv t \; (\emph{mod} \;$ $(t+1)!^{2t+1})$ and $v \geq t+1 >0$,
there exists a perfect secrecy system for $t+1$ equiprobable source states, having $v$ messages and $b=(t+1)!^{2t}t! {v \choose t}$ encoding rules, that is $(t-1)$-fold secure against spoofing.
\end{theorem}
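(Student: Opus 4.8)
The plan is to combine Teirlinck's existence result (Theorem~\ref{Teirl}) with the construction of Theorem~\ref{mythm1}. Given integers $t$ and $v$ with $v \equiv t \pmod{(t+1)!^{2t+1}}$ and $v \geq t+1 > 0$, Theorem~\ref{Teirl} hands us directly a \mbox{$t$-$(v,t+1,\lambda)$} design with $\lambda = (t+1)!^{2t+1}$. Setting $k := t+1$, the only hypothesis of Theorem~\ref{mythm1} that still needs checking is its divisibility condition, namely that $v$ divides the number of blocks $b = \lambda {v \choose t}/{k \choose t}$. So the proof reduces to a short computation verifying this, together with identifying the resulting parameters.

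First I would evaluate $b$ explicitly. Since ${t+1 \choose t} = t+1$ and $(t+1)!/(t+1) = t!$, we obtain
\[
b = \frac{(t+1)!^{2t+1}}{t+1}\,{v \choose t} = (t+1)!^{2t}\, t!\, {v \choose t},
\]
which already matches the value claimed in the statement. To see that $v \mid b$, I would use the elementary identity $t!\,{v \choose t} = v(v-1)\cdots(v-t+1)$, which is visibly a multiple of $v$; hence $b = (t+1)!^{2t}\cdot\big(t!\,{v \choose t}\big)$ is divisible by $v$ as well. (Alternatively, one can phrase this via Lemma~\ref{Comb_t=5}: the condition $v \mid b$ is equivalent to $k \mid r$, and with $k = t+1$ this cancellation is automatic.) With the divisibility hypothesis confirmed, Theorem~\ref{mythm1} applied to the Teirlinck design yields a perfect secrecy system for $t+1$ equiprobable source states, with $v$ messages and $b = (t+1)!^{2t}t!\,{v \choose t}$ encoding rules, that is $(t-1)$-fold secure against spoofing.

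I would close by noting that, since $\lambda = (t+1)!^{2t+1} > 1$ here, the ``only if'' part of Theorem~\ref{mythm1} shows these systems are \emph{not} optimal — the cost of pushing $t$ to arbitrarily large values through Teirlinck's designs. The proof is essentially bookkeeping; the only point requiring a moment's thought is recognizing that the divisibility condition of Theorem~\ref{mythm1} is satisfied \emph{for free} in this setting, because $k = t+1$ makes ${k \choose t} = t+1$ cancel one factor of $(t+1)!$ cleanly, leaving behind $t!\,{v \choose t}$, which carries the required factor of $v$.
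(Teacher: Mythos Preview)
Your proposal is correct and follows essentially the same approach as the paper: both proofs reduce to checking the divisibility condition $v \mid b$ of Theorem~\ref{mythm1} for the Teirlinck design and then invoking that theorem. The only cosmetic difference is that the paper rewrites $v \mid b$ as the equivalent condition $k(k-1)\cdots(k-t+1) \mid \lambda(v-1)\cdots(v-t+1)$ and observes that $(t+1)!$ trivially divides $(t+1)!^{2t+1}(v-1)\cdots(v-t+1)$, whereas you compute $b$ explicitly and read off the factor of $v$ from $t!\,\binom{v}{t} = v(v-1)\cdots(v-t+1)$; these are the same verification in different clothing.
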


\begin{proof}
For the given design parameters, the division property $v \mid b$ holds:
\begin{align*}{}
v \mid \lambda \frac{{v \choose t}}{{k \choose t}}
& \Leftrightarrow  k(k-1) \cdots (k-t+1) \mid \lambda (v-1) \cdots (v-t+1)\\
& \Leftrightarrow  (t+1)! \mid (t+1)!^{2t+1}(v-1)\cdots (v-t+1).
\end{align*}
Therefore, the claim follows by applying Theorem~\ref{mythm1}.\qed
\end{proof}


\section{Explicit Constructions (I): Onefold Immunity}\label{one_immune}

We give in this section very efficient constructions of new optimal systems that are onefold secure against spoofing.

\begin{theorem}\label{mythm_CDF}
If there exists  a difference family \linebreak DF$(v,k,\lambda)$ over a finite additive Abelian group $G$ of order $v$, then there is a perfect secrecy system for $k$ equiprobable source states, having $v$ messages and  $b=\lambda  v(v-1)/(k^2 -k)$ encoding rules, that is onefold secure against spoofing. Moreover, the system is optimal if and only if $\lambda=1$.
\end{theorem}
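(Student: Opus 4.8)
The plan is to reduce the statement to an application of Theorem~\ref{mythm1}. A difference family $\mathrm{DF}(v,k,\lambda)$ over an Abelian group $G$ of order $v$ generates, by developing each base block $D_i$ through $G$ (i.e., forming the translates $D_i+g$ for all $g\in G$), a $2$-$(v,k,\lambda)$ design on the point set $G$: property (ii) of the difference family guarantees that every pair of distinct points lies in exactly $\lambda$ of the developed blocks, and $|D_i|=k$ by property (i). So first I would recall this standard ``development'' construction, noting that the resulting design has $b=\lambda v(v-1)/(k^2-k)$ blocks, which matches the count $\lambda\binom{v}{2}/\binom{k}{2}$ from Lemma~\ref{Comb_t=5}(b) with $t=2$. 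Thus we have exactly the hypothesis ``a $t$-$(v,k,\lambda)$ design'' of Theorem~\ref{mythm1} with $t=2$.

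Next I would verify the divisibility hypothesis $v\mid b$ of Theorem~\ref{mythm1} for this particular design. For $t=2$ we have $r=\lambda(v-1)/(k-1)$ and $b=vr/k$ by Lemma~\ref{Comb_t=5}(a), so $v\mid b$ is equivalent to $k\mid r$, i.e. $k\mid \lambda(v-1)/(k-1)$. But in a design obtained by developing base blocks through the full group $G$ of order $v$, each point is contained in the same number of translates of each base block, namely for each base block $D_i$ exactly $|D_i|=k$ of its $v$ translates contain a given point; hence $r$ is automatically a multiple of $k$ (indeed $r=lk$ where $l$ is the number of base blocks). Equivalently, the cyclic/regular action of $G$ on the blocks partitions the point-block flags evenly, so $v\mid b$ holds for free. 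This is the key point that makes difference families such a clean source of systems satisfying the hypothesis of Theorem~\ref{mythm1}.

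With both hypotheses of Theorem~\ref{mythm1} verified (a $2$-$(v,k,\lambda)$ design with $v\mid b$), Theorem~\ref{mythm1} immediately yields a perfect secrecy system for $k$ equiprobable source states with $v$ messages, $b=\lambda v(v-1)/(k^2-k)$ encoding rules, that is $(t-1)=1$-fold (i.e., onefold) secure against spoofing; and the same theorem states that this system is optimal precisely when $\lambda=1$. That settles the claim. If one prefers a self-contained argument rather than invoking Theorem~\ref{mythm1}, one can exploit the group structure directly: order each base block $D_i=(d_{i,1},\dots,d_{i,k})$ arbitrarily, and take as encoding rules the ordered translates $(d_{i,1}+g,\dots,d_{i,k}+g)$ for $g\in G$; then in any fixed column $j$ the entry runs over $d_{i,j}+G=G$ exactly once as $g$ ranges over $G$, so each symbol appears exactly $l=b/v$ times in each column, which by Lemma~\ref{frequency} gives perfect secrecy under the uniform key distribution, while onefold security follows from the $2$-design property via Theorem~\ref{general}.

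I do not expect a serious obstacle here: the only thing to be careful about is not conflating ``Steiner'' ($\lambda=1$) with the general case, and making sure the optimality clause is pinned to $\lambda=1$ via Theorem~\ref{thm_mas_sch} (with $t=2$: $b\geq\binom{v}{2}/\binom{k}{2}$, met with equality iff $\lambda=1$). The mild subtlety worth stating explicitly is why $v\mid b$ comes automatically from the group development — that is the one place where the hypothesis of Theorem~\ref{mythm1} is not entirely trivial, and it is precisely the regularity of the $G$-action that supplies it.
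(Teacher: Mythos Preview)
Your proposal is correct and follows essentially the same route as the paper: develop the difference family through $G$ to obtain a $2$-$(v,k,\lambda)$ design, verify the divisibility condition $v\mid b$, and then invoke Theorem~\ref{mythm1}. The only cosmetic difference is that the paper establishes $v\mid b$ by observing that the number of base blocks $l=\lambda(v-1)/(k(k-1))$ is an integer (whence $b=lv$), whereas you argue via $r=lk$ and $k\mid r$; these are two phrasings of the same trivial count, and your optional self-contained ordering argument at the end is a welcome bonus that the paper omits.
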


\begin{proof}

Let $\mathcal{F}=\{D_1,\ldots,D_l\}$ be a DF$(v,k,\lambda)$ over $G$. We shall need the two basic facts:

\begin{itemize}

\item[$\bullet$] Since  $l=\frac{\lambda (v-1)}{k(k-1)}$ is a positive integer, we have
 \[\lambda (v-1) \equiv 0 \; (\mbox{mod} \; k(k-1))\quad (*).\] 

\item[$\bullet$] Let $ \mbox{Orb}_G(D_i)=\{D_i +g : g \in G\}$ denote the \emph{$G$-orbit} of $D_i$.
Then the union \[\bigcup_{i=1}^l \mbox{Orb}_G(D_i)\] forms the family of blocks of a \mbox{$2$-$(v,k,\lambda)$} design admitting $G$ as a group of automorphisms acting regularly (i.e., sharply transitively) on the points and semiregularly on the blocks.  
 
\end{itemize}
 
Thus, by $(*)$ and Lemma~\ref{Comb_t=5}, we have $v \mid b$, and the requirements for applying Theorem~\ref{mythm1} are fulfilled. \qed

\end{proof} 

In particular, when $\mathcal{F}=\{D_1,\ldots,D_l\}$ is a \linebreak CDF$(v,k,\lambda)$, then a perfect secrecy system can be constructed very efficiently due to the extremely simple form of its encoding matrix (cf.~Table~\ref{CDF_13}). We note that the special case when $l=1$ in the above theorem has been considered in~\cite[Thm.\,6.5\,\&\,Remark]{Stin90}.  In this case, the respective cyclic difference set is a \emph{Singer difference set} yielding a projective plane of prime power order as  \emph{symmetric} cyclic Steiner \mbox{$2$-design} (i.e., $v=b$). We give an example of a perfect secrecy systems constructed via Theorem~\ref{mythm_CDF} based on a CDF$(13,3,1)$.

\begin{example}\label{Ex2}
A CDF$(13,3,1)$ has two base blocks $D_1=\{0,1,4\}$ and $D_2=\{0,2,7\}$. The orbits of $D_1$ and $D_2$ immediately form an encoding matrix as given in Table~\ref{CDF_13}. The perfect secrecy system, having $3$ equiprobable source states,  $13$ messages and $26$ encoding rules, is optimal and offers onefold security against spoofing.
\end{example}

\begin{table}
\renewcommand{\arraystretch}{1.3}
\caption{Perfect secrecy system from a cyclic difference family CDF$(13,3,1)$.}\label{CDF_13}

\hspace*{0.1cm}

\begin{center}

\begin{tabular}{|c| c c c|}
  \hline
  & $s_1$ & $s_2$ &$ s_3$\\
  \hline
  $e_1$ &  0 & 1 & 4\\
  $e_2$ &  1 & 2 & 5 \\ 
  $e_3$ &  2 & 3 & 6 \\
  $e_4$ &  3 & 4 & 7\\  
  $e_5$ &  4 & 5 & 8\\   
  $e_6$ &  5& 6 & 9\\
  $e_7$ &  6& 7& 10\\ 
  $e_8$ &  7& 8& 11\\
  $e_9$ &  8&9 &12 \\  
  $e_{10}$ & 9 &10 &0 \\  
  $e_{11}$ &  10&11 & 1\\
  $e_{12}$ &  11& 12& 2\\ 
  $e_{13}$ &  12& 0& 3\\
  \hdashline
  $e_{14}$ &  0& 2 & 7\\  
  $e_{15}$ &  1& 3& 8\\
  $e_{16}$ &  2& 4& 9\\  
  $e_{17}$ &  3& 5& 10\\
  $e_{18}$ &  4& 6& 11\\ 
  $e_{19}$ &  5& 7& 12\\
  $e_{20}$ &  6& 8& 0\\  
  $e_{21}$ &  7& 9& 1\\
  $e_{22}$ &  8& 10& 2\\  
  $e_{23}$ &  9& 11& 3\\
  $e_{24}$ &  10& 12& 4\\ 
  $e_{25}$ &  11& 0& 5\\
  $e_{26}$ &  12& 1& 6\\     
  \hline
\end{tabular}

\end{center}
\end{table}

\begin{example}
The following infinite ((i)-(iii)) and finite ((iv)-(v)) families of cyclic difference families \linebreak $\mbox{CDF}(q,k,1)$ with $q$ a prime power are known (cf.~\cite{CWZ02} and the references therein;~\cite{crc06}):

\begin{enumerate}

\item[(i)] For $k=3,4$ and $5$, respectively, a $\mbox{CDF}(q,k,1)$ exists for all prime powers $q \equiv 1$ (mod $k(k-1)$).

\item[(ii)] A $\mbox{CDF}(q,6,1)$ exists for all prime powers $q \equiv 1$ (mod $30$) with the exception $q=61$.

\item[(iii)] A $\mbox{CDF}(q,7,1)$ exists for all prime powers $q \equiv 1$ (mod $42$) with the exception $q=43$, and the possible exceptions $q=127,\, 211,\,31^6$ as well a $q \in [261239791, \,1{.}236597 \times 10^{13}]$ such that $(-3)^{\frac{q-1}{14}} = 1$ in $\mathbb{F}_q$.

\item[(iv)] A $\mbox{CDF}(q,8,1)$ exists for all prime powers $q \equiv 1$ (mod $56$) $ < 10^4$, with the possible exceptions $q=113, \,169, \,281, \,337$.

\item[(v)] A $\mbox{CDF}(q,9,1)$ exists for all prime powers $q \equiv 1$ (mod $72$) $ < 10^4$, with the possible exceptions $q=289, \,361$.

\end{enumerate}

Hence, in all these cases a perfect secrecy system for $k$ equiprobable source states, having $q$ messages and $q(q-1)/(k^2-k)$ encoding rules, that is optimal and onefold secure against spoofing can be constructed very efficiently.

\end{example}

\section{Explicit Constructions (II): Multifold Immunity}\label{multi_immune}

We construct in this section the first near-optimal systems that are $5$- and $6$-fold secure as well as further systems with a feasible number of keys that are $7$-fold secure against spoofing. Recall that number of encoding rules in Theorem~\ref{mythm1} is $\lambda$ times the lower bound of Theorem~\ref{thm_mas_sch}. In order to construct perfect secrecy systems with a high level of security against spoofing, we are therefore interested in \mbox{$t$-designs} with large $t$ and small values of $\lambda$. These designs must satisfy the divisibility condition $v \mid b=\lambda  {v \choose t}/{k \choose t}$ of Theorem~\ref{mythm1}. When $2 \leq \lambda \leq 10$, we call such a system \emph{near-optimal}.

Relying on the Kramer--Mesner method~\cite{KM76}, various \mbox{$t$-designs} with large $t$ have been constructed in recent years under some prescribed groups of automorphisms (cf.~\cite{BLaue97,BLaue99,BLaue99b,Laue06}). We give some examples related to our considerations.

\begin{example}
A \mbox{$6$-$(19,7,4)$} design and three \mbox{$6$-$(19,7,6)$} designs have been constructed in~\cite{BLaue97} by prescribing the groups $Hol(C_{17})\!+\!+$ and  $Hol(C_{19})$, respectively (where the $+$ operator adds a fixed point to a permutation group).
The only known two smaller \mbox{$6$-$(14,7,4)$} designs have $C_{13}+$ as a prescribed group of automorphisms, but do not satisfy our divisibility condition. The only known further $6$-design with $\lambda=4$ has parameters \mbox{$6$-$(23,7,4)$}, and is derived from the unique \mbox{$7$-$(24,8,4)$} design with $PSL(2,23)$ as a prescribed group of automorphisms.
\end{example}

\begin{example}
There are \mbox{$7$-$(24,8,\lambda)$} designs admitting \linebreak $PSL(2,23)$ with possible values $\lambda=4,\ldots,8$. However, only for $\lambda=8$ the divisibility condition is fulfilled. There exist \mbox{$7$-$(26,8,6)$} designs, which have been constructed  with $PGL(2,25)$ as a prescribed group of automorphisms (cf.~\cite{BLaue97}).
\end{example}

\begin{example}
The construction of \mbox{$8$-$(31,10,100)$} designs has been established in~\cite{BLaue99b} with $PSL(3,5)$ as a prescribed group of automorphisms. The only known $8$-designs with smaller $\lambda$ are \mbox{$8$-$(31,10,93)$} designs admitting $PSL(3,5)$ again, but do not satisfy the divisibility condition.
\end{example}

We present in Table~\ref{6-des_NEW} all near-optimal perfect secrecy systems that are $5$- and \mbox{$6$-fold} secure against spoofing under equiprobable source probability distributions. We give the parameters of the systems as well as of the respective designs. We also indicate the optimal number $b_{\mbox{\footnotesize{opt}}}$ of encoding rules with respect to Theorem~\ref{thm_mas_sch}. All presently known \mbox{$t$-designs} with $t>5$ and $\lambda \leq 10$ have been considered. We generally remark that all known \mbox{$t$-$(v,k,\lambda)$} designs with $t >5$ have $\lambda \geq 4$. Furthermore, three infinite series of \mbox{$6$-designs} are known, however, for each $\lambda$ increases rapidly.

\begin{table}
\renewcommand{\arraystretch}{1.3}
\caption{Near-optimal perfect secrecy systems from $6$- and \mbox{$7$-designs} that are $5$- and \mbox{$6$-fold} secure against spoofing attacks}\label{6-des_NEW}

\begin{center}
\begin{tabular}{|c||c c c|c|c|}
  \hline
  $t$ & $k$ & $v$ & $b$ & $b_{\mbox{\tiny{opt}}}$ &\mbox{Design Parameters}\\
  \hline \hline
   & 7  & 19 & $4\times b_{\mbox{\tiny{opt}}}$ & 3{,}876 & $6$-$(19,7,4)$ \\
   & 7  & 22 & $8\times b_{\mbox{\tiny{opt}}}$ & 10{,}659 & $6$-$(22,7,8)$ \\
 5 & 7  & 23 & $4\times b_{\mbox{\tiny{opt}}}$ & 14{,}421 & $6$-$(23,7,4)$ \\
   & 7  & 25 & $6\times b_{\mbox{\tiny{opt}}}$ & 25{,}300 & $6$-$(25,7,6)$  \\
   & 7  & 32 & $6\times b_{\mbox{\tiny{opt}}}$ & 129{,}456 & $6$-$(32,7,6)$  \\
   \hline
   & 8  & 24 & $8\times b_{\mbox{\tiny{opt}}}$ & 43{,}263 & $7$-$(24,8,8)$  \\
6  & 8  & 26 & $6\times b_{\mbox{\tiny{opt}}}$ & 82{,}225 & $7$-$(26,8,6)$  \\
   & 8  & 33 &$10\times b_{\mbox{\tiny{opt}}}$ & 534{,}006 & $7$-$(33,8,10)$  \\
   \hline
\end{tabular}
\end{center}
\end{table}

\begin{table}
\renewcommand{\arraystretch}{1.3}
\caption{Some perfect secrecy systems from \mbox{$8$-designs} that are \mbox{$7$-fold} secure against spoofing attacks}\label{8-des_NEW}

\begin{center}
\begin{tabular}{|c||c c c|c|c|}
  \hline
  $t$ & $k$ & $v$ & $b$ & $b_{\mbox{\tiny{opt}}}$ &\mbox{Design Parameters}\\
  \hline \hline
   & 10  & 31 & $100\times b_{\mbox{\tiny{opt}}}$ & 175{,}305 & $8$-$(31,10,100)$  \\
   & 11  & 27 & $432\times b_{\mbox{\tiny{opt}}}$ & 13{,}455 & $8$-$(27,11,432)$  \\
 7 & 11  & 36 &$1{,}260\times b_{\mbox{\tiny{opt}}}$ & 183{,}396 & $8$-$(36,11,1260)$  \\
   & 11  & 40 &$1{,}440\times b_{\mbox{\tiny{opt}}}$ & 466{,}089 & $8$-$(40,11,1440)$  \\
   & 12  & 27 &$1{,}296\times b_{\mbox{\tiny{opt}}}$ & 4{,}485 & $8$-$(27,12,1296)$  \\
   \hline
\end{tabular}
\end{center}
\end{table}

\smallskip

In Table~\ref{8-des_NEW}, we give further perfect secrecy systems with a feasible number of encoding rules that are \mbox{$7$-fold} secure against spoofing under equiprobable source probability distributions. All presently known \mbox{$t$-designs} with $t>7$ and $\lambda \leq 3{,}000$ have been considered.

\smallskip

We refer to the above references for further information on the respective designs.

\begin{remark}
As indicated in Table~\ref{6-des_NEW}, a perfect secrecy system, constructed from a $6$-$(23,7,4)$ design, with \mbox{$k=7$} equiprobable source states and $v=23$ messages that is \mbox{$5$-fold} secure against spoofing requires $57{,}684$ encoding rules. A perfect secrecy system, constructed from a \mbox{$6$-$(25,7,6)$ design},  with $k=7$ equiprobable source states and $v=25$ messages that is \mbox{$5$-fold} secure against spoofing requires $151{,}800$ encoding rules.

For comparison, a perfect (\mbox{$5$-fold}) secrecy system, constructed from an \mbox{APA$_{10}(5,6,24)$}, with $k=6$ source states and $v=24$ messages that offers \mbox{$4$-fold} security against spoofing for an arbitrary source probability distribution requires $425{,}040$ encoding rules. A perfect (\mbox{$5$-fold}) secrecy system, constructed from an \linebreak \mbox{APA$_{60}(5,7,24)$}, with $k=7$ source states and $v=24$ messages that is \mbox{$4$-fold} secure against spoofing for an arbitrary source probability distribution requires $2{,}550{,}240$ encoding rules (cf.~Subsection~\ref{arbitrary}).
\end{remark}


\section{Application to the Verification Oracle Model}\label{oracle}

We will now consider the scenario, where the opponent has access to a \emph{verification oracle (V-oracle)}.
In this extended authentication model, we assume that the opponent is no longer restricted to \emph{passively} observing messages transmitted by the sender to the receiver. The opponent may send a message of the opponent's choice to the receiver and observe the receiver's response whether or not the receiver accepts it as authentic. This more powerful, \emph{pro-active} attack scenario can be modeled in terms of a V-oracle that provides a response (\emph{accept} or \emph{reject}) to a query message in the same way as the message would be accepted or not by the legitimate receiver. This attack model was recently introduced in~\cite{gold04,saf04}. We recall and slightly adjust the notation as far as it is necessary for our consideration. Further details on this model can be found in~\cite{gold04,saf04,tonien07,tonien09}.

In~\cite{tonien07}, the two types of \emph{online} and \emph{offline} attacks are studied. In the online attack, the receiver is supposed to respond to each incoming query message, and thus the opponent is successful as soon as the receiver accepts a message as authentic. Thus, every query message is at the same time a spoofing message. In the offline attack, the query and the spoofing phase are separated. First, the opponent makes all his queries to the oracle, and then uses this collected (state) information to construct a spoofing message. In both scenarios, the opponent is assumed to be adaptive. The online attack  models an opponent's interaction with a verification oracle such as an ATM machine, while in the offline attack the opponent may have captured an offline verification box. Often, the offline attack model is used as an intermediate model for analyzing the online scenario.
We speak in each scenario of a \emph{spoofing attack} of \mbox{order $i$} \emph{in the V-oracle model} if the opponent has access to $i$ verification queries. The opponent's strategy can be modeled via probability distributions on the query set $\mathcal{M}$ of verification queries. The \emph{online deception probability} $P_{d_i}^{\mbox{\tiny{\sf{online}}}}$, respectively \emph{offline deception probability} $P_{d_i}^{\mbox{\tiny{\sf{offline}}}}$, denotes the probability that the opponent can deceive the receiver with a spoofing attack of order $i$. In~\cite{tonien07}, lower bounds on these deception probabilities have been obtained.

\begin{theorem}[Tonien--Safavi-Naini--Wild]
In an authentication system with $k$ source states and $v$ messages, the offline and online deception probabilities in the V-oracle model are bounded below by
\[P_{d_i}^{\mbox{\tiny{\sf{offline}}}} \geq \frac{k}{v} \quad \mbox{and} \quad P_{d_i}^{\mbox{\tiny{\sf{online}}}} \geq 1- \frac{{v-k \choose i+1}}{{v \choose i+1}},\quad \mbox{respectively}.\]
\end{theorem}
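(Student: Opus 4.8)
The plan is to establish the two bounds separately by a counting argument over the set of encoding rules consistent with the opponent's observations, following the strategy of Massey's original proof but adapted to the V-oracle setting. First I would set up notation: the opponent's optimal strategy in either scenario induces, at each step, a probability distribution over query messages, and the opponent "wins" when the receiver accepts a query. The key quantity is, for a set $Q\subseteq\mathcal{M}$ of query messages already posed (all rejected, in the offline case, or the decisive accepting query in the online case), the conditional probability that a fresh message $m'$ is valid given everything observed so far. The two bounds will then follow by averaging this conditional acceptance probability over the opponent's best choice, exactly as $P_{d_i}\ge(k-i)/(v-i)$ is obtained in Massey's theorem.

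For the offline bound $P_{d_i}^{\mbox{\tiny{\sf{offline}}}}\ge k/v$, I would argue as follows. After the query phase the opponent has learned only that certain queried messages were rejected, i.e.\ they lie outside $M(e)$ for the (unknown) key $e$ in use. The opponent then picks a spoofing message $m'$. For any fixed key $e$, the set $M(e)$ has exactly $k$ of the $v$ messages, so a "blind" uniform guess of $m'$ succeeds with probability $k/v$ on average; the point is that conditioning on the rejected queries cannot decrease the best attainable success probability below $k/v$, because one can always restrict attention to the sub-ensemble of keys consistent with the rejections and run the same averaging argument on that sub-ensemble — every key still has a valid-message set of size $k$ out of the $v$ available message slots. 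Formally this is a convexity/averaging step: $\sum_{m'} p_M(m' \text{ valid} \mid \text{observations}) = k$ over the $v$ messages, so the maximum over $m'$ is at least $k/v$.

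For the online bound $P_{d_i}^{\mbox{\tiny{\sf{online}}}}\ge 1-\binom{v-k}{i+1}/\binom{v}{i+1}$, the opponent is allowed $i$ queries and wins if \emph{any} of them (or the final one) is accepted; equivalently the opponent loses only if all $i+1$ relevant messages it probes are invalid for the key in use. For a fixed key $e$, the number of $(i+1)$-subsets of $\mathcal{M}$ avoiding the $k$-element set $M(e)$ entirely is $\binom{v-k}{i+1}$, out of $\binom{v}{i+1}$ total $(i+1)$-subsets; hence a uniformly random choice of $i+1$ distinct probe messages fails with probability exactly $\binom{v-k}{i+1}/\binom{v}{i+1}$, averaged over keys. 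Since the opponent can do at least as well as this uniform random strategy, the success probability is at least the stated complement. I would phrase this as: the best online strategy dominates the naive one that asks $i+1$ uniformly random distinct queries, and then compute the failure probability of that naive strategy by the hypergeometric count above.

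The main obstacle — and the step I would treat most carefully — is the adaptivity of the opponent in the online model: because the receiver responds after each query, the opponent's later queries may depend on the (necessarily "reject") answers to earlier ones, so one must check that adaptivity gives no disadvantage relative to the non-adaptive uniform strategy used as the lower-bound witness. The clean way around this is to note that any "reject" answer only \emph{removes} keys from consideration and never enlarges $M(e)$, so the conditional probability that a freshly chosen message is valid, given a history of rejections, is at least as large as the unconditional figure on a suitable sub-ensemble; a short martingale/telescoping argument over the $i+1$ rounds then yields that the probability of surviving all rounds is at most $\binom{v-k}{i+1}/\binom{v}{i+1}$. The offline case is comparatively routine since the query and spoofing phases are decoupled and only the single averaging step over the $v$ messages is needed.
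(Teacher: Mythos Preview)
The paper does not prove this theorem; it is quoted as a result of Tonien, Safavi-Naini and Wild from \cite{tonien07} and stated without proof, so there is no in-paper argument to compare your proposal against.

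On its own merits your sketch is essentially correct, though somewhat over-engineered. Both statements are \emph{lower} bounds on the opponent's success probability, so it suffices to exhibit a single (possibly randomized) strategy meeting the bound. For the offline bound your averaging step is exactly right: ignoring the oracle and noting that $\sum_{m\in\mathcal{M}}\Pr[m\in M(e)]=k$ for every key forces some message to succeed with probability at least $k/v$. For the online bound, your non-adaptive witness strategy of submitting $i{+}1$ uniformly random distinct messages fails, for \emph{every} key $e$, with probability exactly $\binom{v-k}{i+1}/\binom{v}{i+1}$, which already gives the claim. Your final paragraph about adaptivity is unnecessary: since you are proving a lower bound by exhibiting a non-adaptive strategy, and adaptive strategies include non-adaptive ones, there is nothing further to verify --- no martingale or telescoping argument is needed. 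One minor model slip: in the offline attack the oracle may well answer \emph{accept} to some queries, so your parenthetical ``all rejected, in the offline case'' is not part of the model; fortunately your averaging argument does not depend on it.
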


Interestingly, it furthermore follows that
\[P_{d_i}^{\mbox{\tiny{\sf{offline}}}} = \frac{k}{v} \quad \mbox{if and only if} \quad P_{d_i}^{\mbox{\tiny{\sf{online}}}} = 1-\frac{{v-k \choose i+1}}{{v \choose i+1}}.\]

Thus, an authentication system that attains the \linebreak bound in the offline attack is the same as in the online attack, and vice versa.
Clearly, $P_{d_i}^{\mbox{\tiny{\sf{offline}}}}$ is independent of $i$. If the bound for $P_{d_i}^{\mbox{\tiny{\sf{online}}}}$ is satisfied with equality, then also the bound for $P_{d_{i-1}}^{\mbox{\tiny{\sf{online}}}}$ is satisfied with equality for $i>1$ (cf.~\cite{tonien07}). Hence, we call a system \emph{\mbox{$t$-fold} secure against spoofing in the V-oracle model} if $P_{d_t}^{\mbox{\tiny{\sf{offline}}}} = \frac{k}{v}$ or, equivalently, $P_{d_t}^{\mbox{\tiny{\sf{online}}}} = 1-\frac{{v-k \choose t+1}}{{v \choose t+1}}$. The notation of perfect secrecy holds as given in Section~\ref{Model}. An analogue to Theorem~\ref{thm_mas_sch} has been derived in~\cite{tonien07} for the V-oracle model.

\begin{theorem}[Tonien--Safavi-Naini--Wild]\label{masssch_oracle}
If an \linebreak authentication system is $(t-1)$-fold secure against spoofing in the V-oracle model, then the number of encoding rules is bounded below by
\[b \geq \frac{{v \choose t}}{{k \choose t}}.\]
\end{theorem}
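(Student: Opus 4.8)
The plan is to mirror the strategy that works for the classical Massey--Schöbi bound (Theorem~\ref{thm_mas_sch}), only now counting with respect to the offline deception probability rather than the ordinary one. First I would recall the defining equality: the system is $(t-1)$-fold secure against spoofing in the V-oracle model, which by the discussion preceding the statement is equivalent to $P_{d_{t-1}}^{\mbox{\tiny{\sf{offline}}}} = k/v$, and hence (by the iteration property quoted from~\cite{tonien07}) also $P_{d_i}^{\mbox{\tiny{\sf{offline}}}} = k/v$ for all $0 \le i \le t-1$. The key structural fact I expect to need is that attaining the offline bound forces, for every encoding rule $e$ and every set of $t$ messages that lies inside $M(e)$, a uniformity condition: roughly, each $t$-subset of $\mathcal{M}$ must be ``covered'' by a controlled number of encoding rules, because otherwise the opponent could, after $t-1$ verification queries revealing messages consistent with some rule, bias the final spoofing choice and beat $k/v$.

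The main step is the double-counting argument. I would count the set of pairs $(T, e)$ where $T$ is a $t$-subset of $\mathcal{M}$, $e \in \mathcal{E}$, and $T \subseteq M(e)$. On one hand, each encoding rule $e$ contributes exactly $\binom{k}{t}$ such pairs, since $|M(e)| = k$, giving a total of $b \binom{k}{t}$. On the other hand, I claim that the optimality/security hypothesis forces every $t$-subset $T$ of $\mathcal{M}$ to be contained in $M(e)$ for at least one encoding rule $e$ — this is the analogue of the ``every $t$-set of messages is decodable'' property that underlies Theorem~\ref{thm_mas_sch}. Indeed, if some $t$-subset $T = \{m_1,\dots,m_t\}$ were contained in no $M(e)$, an opponent could query $m_1,\dots,m_{t-1}$ offline; conditioned on the (possibly empty) information gathered, the message $m_t$ would then be rejected with certainty by every surviving key, which one shows is incompatible with $P_{d_{t-1}}^{\mbox{\tiny{\sf{offline}}}} = k/v > 0$ — more carefully, the equality $P_{d_i}^{\mbox{\tiny{\sf{offline}}}} = k/v$ for all $i \le t-1$ forces a balanced covering in which no $t$-set is missed. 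Hence the number of pairs is at least $\binom{v}{t}$, and combining the two counts yields $b\binom{k}{t} \ge \binom{v}{t}$, i.e. the asserted bound.

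The hard part will be pinning down precisely the combinatorial consequence of the chain of equalities $P_{d_i}^{\mbox{\tiny{\sf{offline}}}} = k/v$, $0 \le i \le t-1$, on the incidence structure between encoding rules and $t$-subsets of messages. In the classical (passive) setting this is the statement that the system, viewed as a collection of valid-message sets, forms (the block set of) a covering design with $\lambda \ge 1$ on all $t$-subsets; in the V-oracle setting one must check that the adaptive offline adversary's optimal strategy, which may condition the final spoof on the transcript of $t-1$ accept/reject answers, cannot be used to circumvent this covering requirement. I would isolate this as a lemma: \emph{if $P_{d_{t-1}}^{\mbox{\tiny{\sf{offline}}}} = k/v$ then every $t$-subset of $\mathcal{M}$ is a subset of $M(e)$ for some $e \in \mathcal{E}$}, and prove it by induction on $t$, using at the base case ($t=1$) that $P_{d_0}^{\mbox{\tiny{\sf{offline}}}} = k/v$ already says each single message lies in the valid set of a fraction $k/v$ of the (uniformly weighted, by perfect secrecy) keys, in particular in at least one. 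Once the lemma is in hand, the counting bound follows immediately and the proof is complete.
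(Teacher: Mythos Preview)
The paper does not actually prove this theorem: it is quoted as a known result from~\cite{tonien07} (``An analogue to Theorem~\ref{thm_mas_sch} has been derived in~\cite{tonien07} for the V-oracle model''), with no proof given. So there is no ``paper's own proof'' to compare your proposal against.

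That said, your proposal has a genuine gap in the central lemma. Your double count is fine \emph{provided} every $t$-subset of $\mathcal{M}$ lies in some $M(e)$, but your justification of this covering property does not go through. You argue that if $T=\{m_1,\dots,m_t\}$ is contained in no $M(e)$, then after querying $m_1,\dots,m_{t-1}$ offline the message $m_t$ is rejected by every surviving key, and that this contradicts $P_{d_{t-1}}^{\mbox{\tiny{\sf{offline}}}}=k/v>0$. It does not: the opponent's \emph{goal} is to get a message \emph{accepted}, so exhibiting a message that is certainly rejected gives the opponent no advantage whatsoever and is perfectly compatible with the offline bound holding with equality. You would need instead to show that an uncovered $t$-set lets the opponent \emph{raise} the acceptance probability of some \emph{other} spoofing message above $k/v$ after seeing the transcript, and this requires a genuinely different argument (tracking how the accept/reject answers on $m_1,\dots,m_{t-1}$ skew the posterior on $\mathcal{E}$). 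Your base case also smuggles in assumptions not present in the hypothesis: you invoke ``uniformly weighted, by perfect secrecy'' keys, but the theorem assumes neither perfect secrecy nor a uniform encoding strategy. The actual argument in~\cite{tonien07} proceeds by analyzing the adaptive offline adversary's posterior after each query and extracting a $t$-design--type regularity (not merely a covering) from the chain of equalities; your sketch does not yet capture that mechanism.
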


Again, we call a system \emph{optimal} when the lower bound holds with equality. For equiprobable source \linebreak states, optimal authentication systems which are $(t-1)$-fold against spoofing in the V-oracle model have been characterized in~\cite{tonien07}. We give the result in a slightly more generalized form, which can easily be obtained from the original proof.

\begin{theorem}[Tonien--Safavi-Naini--Wild]\label{DeSoete_oracle}
Suppose there is a \mbox{$t$-$(v,k,\lambda)$} design. Then there is an authentication system for $k$ equiprobable source states, having $v$ messages and $\lambda \cdot {v \choose t}/{k \choose t}$ encoding rules, that is \mbox{$(t-1)$-fold} secure against spoofing in the V-oracle model. Conversely, if there is an authentication system for $k$ equiprobable source states, having  $v$ messages and ${v \choose t}/{k \choose t}$ encoding rules, that is \mbox{$(t-1)$-fold} secure against spoofing in the V-oracle model, then there is a Steiner \linebreak \mbox{$t$-$(v,k,1)$} design.
\end{theorem}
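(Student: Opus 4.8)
\textbf{Proof plan for Theorem~\ref{DeSoete_oracle}.} The plan is to mimic the classical De\,Soete--Sch\"obi--Stinson argument (Theorem~\ref{general}) but carried out against the combinatorial characterizations of optimality in the V-oracle model rather than in the classical Massey model. The key observation, already recorded in the discussion preceding the statement, is that $(t-1)$-fold security in the V-oracle model is equivalent to the single equality $P_{d_{t-1}}^{\mbox{\tiny{\sf{offline}}}} = k/v$, which in turn forces $P_{d_i}^{\mbox{\tiny{\sf{online}}}} = 1-\binom{v-k}{i+1}/\binom{v}{i+1}$ for all $0\le i\le t-1$. So the real content is to translate these deception-probability equalities into a statement about how often each valid message-set of size $j\le t$ is covered by the encoding rules, and then to recognize that condition as the defining property of a $t$-design (or Steiner $t$-design in the converse).

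For the forward direction I would start from a $t$-$(v,k,\lambda)$ design $\mathcal{D}=(X,\mathcal{B})$, take $\mathcal{M}=X$, and use the blocks of $\mathcal{D}$ as the set $M(e)$ of valid messages for the encoding rules $e$, each block repeated once and all encoding rules used with equal probability (for equiprobable source states one may fix any bijection from $\mathcal{S}$ onto each block; the particular bijection is irrelevant for the V-oracle bounds, which depend only on the sets $M(e)$). Then for any $i$-subset $Q$ of $\mathcal{M}$ that is a subset of some valid message set, the number of encoding rules $e$ with $Q\subseteq M(e)$ is exactly $\lambda_i = \lambda\binom{v-i}{t-i}/\binom{k-i}{t-i}$ by Lemma~\ref{s-design}, and for $i\le t$ this count is \emph{independent of the choice of $Q$}. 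The optimal offline strategy of the opponent is to query a set $Q$ that is extendable to a valid set and then submit a further message $m'$; the probability of success is the fraction of the $\lambda_i$ surviving rules whose valid set contains $m'$, maximized over $m'$. Because the surviving rules form (the block-set of) a $(t-i)$-$(v-i,k-i,\lambda_i)$ derived design on $\mathcal{M}\setminus Q$, every point of $\mathcal{M}\setminus Q$ lies in the same number $\lambda_{i+1}$ of them, so the best the opponent can do is $\lambda_{i+1}/\lambda_i=(k-i)/(v-i)$ in the classical reading, and the analogous count of \emph{at least one accepted} message among a submitted set gives precisely $1-\binom{v-k}{i+1}/\binom{v}{i+1}$ in the online reading. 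Matching this against the lower bounds of the Tonien--Safavi-Naini--Wild theorem shows the bounds are met with equality for every $i\le t-1$; hence the system is $(t-1)$-fold secure in the V-oracle model, and it plainly has $b=|\mathcal{B}|=\lambda\binom{v}{t}/\binom{k}{t}$ encoding rules by Lemma~\ref{Comb_t=5}(b).

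For the converse, suppose an authentication system for $k$ equiprobable source states with $v$ messages and exactly $\binom{v}{t}/\binom{k}{t}$ encoding rules is $(t-1)$-fold secure in the V-oracle model. Since the number of rules meets the bound of Theorem~\ref{masssch_oracle} with equality, the system is \emph{optimal}, and I would argue (exactly as in the classical case) that optimality plus the $(t-1)$-fold security equalities force: (a) every valid message set $M(e)$ has size $k$ and no two encoding rules share the same valid set, so the $M(e)$ are $b$ distinct $k$-subsets of $\mathcal{M}$; and (b) every $t$-subset of $\mathcal{M}$ is contained in exactly $b\binom{k}{t}/\binom{v}{t}=1$ of these valid sets. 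Point~(b) is where the equality chain is used: if some $t$-subset were covered by fewer rules than another, the opponent could concentrate queries on the scarcely-covered set and beat the bound at order $t-1$, and a double-counting of (flag, extension) pairs pins the common cover number to exactly $1$. Then $(\mathcal{M}, \{M(e):e\in\mathcal{E}\})$ is by definition a Steiner $t$-$(v,k,1)$ design.

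The main obstacle I anticipate is purely bookkeeping: writing the opponent's optimal online strategy so that the $\binom{v-k}{i+1}/\binom{v}{i+1}$ expression drops out cleanly, i.e.\ showing that after $i$ adaptive queries the set of still-consistent encoding rules behaves exactly like a derived design, so that the number of valid sets \emph{avoiding} a chosen $(i{+}1)$-set of future queries is governed by $\binom{v-k}{i+1}$. Once that structural fact is in place, every equality both in the forward and backward direction is a one-line consequence of Lemma~\ref{s-design} and Lemma~\ref{Comb_t=5}, and the generalization from the $\lambda=1$ statement in~\cite{tonien07} to arbitrary $\lambda$ is immediate because the design-counting identities carry the $\lambda$ through transparently.
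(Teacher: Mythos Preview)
The paper does not actually supply a proof of Theorem~\ref{DeSoete_oracle}: it is quoted as a result of Tonien--Safavi-Naini--Wild from~\cite{tonien07}, with the single remark that the stated ``slightly more generalized form'' (allowing arbitrary $\lambda$ in the forward direction) ``can easily be obtained from the original proof.'' So there is no in-paper argument to compare against; your plan is necessarily being measured only against that one-line pointer.

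Relative to that pointer, your proposal is exactly in the intended spirit: you transplant the De\,Soete--Sch\"obi--Stinson construction to the V-oracle setting, invoke the Tonien--Safavi-Naini--Wild lower bounds in place of Massey's, and observe that the design-counting identities of Lemmas~\ref{s-design} and~\ref{Comb_t=5} carry the parameter $\lambda$ through unchanged, which is precisely the ``easy generalization'' the paper alludes to. The converse via Theorem~\ref{masssch_oracle} and a double count is likewise the standard route.

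One point to tighten when you write it out: in the V-oracle model the opponent's state after $i$ adaptive queries is a pair $(Q_{\mathrm{acc}},Q_{\mathrm{rej}})$, and the surviving encoding rules are those whose valid set contains $Q_{\mathrm{acc}}$ \emph{and} avoids $Q_{\mathrm{rej}}$. Your sketch treats the all-accepted branch carefully (derived design, $\lambda_{i+1}/\lambda_i$), but the mixed branches need an explicit inclusion--exclusion over the rejected messages to see that the conditional success probability is still $k/v$ (offline) or that the ``at least one accept'' probability matches $1-\binom{v-k}{i+1}/\binom{v}{i+1}$ (online). You flag this as the main obstacle, correctly; just be aware that ``behaves exactly like a derived design'' is not literally true once rejections are present, and the clean expression comes instead from the complementary count of blocks missing a prescribed $(i{+}1)$-set, which in a $t$-design with $i+1\le t$ is again constant by inclusion--exclusion on the $\lambda_s$.
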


We will apply now Theorem~\ref{mythm1} to construct perfect secrecy systems that provide a high level of security against spoofing in the V-oracle model for equiprobable source probability distributions. This generalizes the result~\cite[Thm.\,3.27]{Hu2010now}, where the case $\lambda=1$ has been treated.

\begin{theorem}\label{mythm1_oracle}
Suppose there is a \mbox{$t$-$(v,k,\lambda)$} design, where $v$ divides the number of blocks $b=\lambda  {v \choose t}/{k \choose t}$. Then there is a perfect secrecy system for $k$ equiprobable source states, having $v$ messages and $b$ encoding rules, that is $(t-1)$-fold secure against spoofing in the V-oracle model. Moreover, the system is optimal if and only if $\lambda=1$.
\end{theorem}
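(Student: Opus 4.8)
The plan is to mirror the proof of Theorem~\ref{mythm1} almost verbatim, substituting the V-oracle versions of the combinatorial-security results for the classical ones. The statement is the V-oracle analogue of Theorem~\ref{mythm1}, and the whole point is that the only ingredients that entered the original proof were (a) the construction of a $(t-1)$-fold secure system from a $t$-design (Theorem~\ref{general}), (b) the perfect-secrecy characterization via equal column frequencies (Lemma~\ref{frequency}), (c) the divisibility bookkeeping $v\mid b \Leftrightarrow k\mid r$ via Lemma~\ref{Comb_t=5}, and (d) the optimality criterion from the Massey--Sch\"obi bound (Theorem~\ref{thm_mas_sch}). In the V-oracle world, (a) is replaced by Theorem~\ref{DeSoete_oracle}, (d) is replaced by Theorem~\ref{masssch_oracle}, while (b) and (c) are unchanged because perfect secrecy is defined exactly as in Section~\ref{Model} and the divisibility calculation is purely a statement about the design parameters.

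Concretely, I would start from a $t$-$(v,k,\lambda)$ design $\mathcal{D}=(X,\mathcal{B})$ with $v\mid b$, where $b=\lambda\binom{v}{t}/\binom{k}{t}$. First, invoke Theorem~\ref{DeSoete_oracle} to obtain an authentication system for $k$ equiprobable source states with $v$ messages and $b$ encoding rules that is $(t-1)$-fold secure against spoofing in the V-oracle model; the encoding rules are the blocks of $\mathcal{D}$, suitably ordered. Second, I would show this system can be arranged to have perfect secrecy under the uniform encoding strategy. By Lemma~\ref{frequency}, since the encoding rules are equiprobable, perfect secrecy is equivalent to every message occurring with the same frequency in every column of the encoding matrix. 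This is achieved exactly as before: each point of $X$ lies in $r=\lambda\binom{v-1}{t-1}/\binom{k-1}{t-1}$ blocks (Lemma~\ref{Comb_t=5}), so we need $k\mid r$, which by Lemma~\ref{Comb_t=5}(b) is equivalent to $v\mid b$ — our hypothesis. The sufficiency of this divisibility is the edge-coloring argument from the proof of Theorem~\ref{mythm1}: split each point $x$ into $b/v$ copies of degree $k$ in the point-block incidence graph, obtain a $k$-regular bipartite graph, properly edge-color it with $k$ colors (Vizing/K\H{o}nig for bipartite graphs), and read off the color classes as the $k$ positions; then every point occupies each position in exactly $b/v$ ordered blocks. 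Third, for optimality, by Theorem~\ref{masssch_oracle} the lower bound $b\geq\binom{v}{t}/\binom{k}{t}$ holds for any system that is $(t-1)$-fold secure in the V-oracle model, and our $b$ equals $\lambda$ times this bound, so equality holds iff $\lambda=1$.

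There is essentially no new obstacle here: the substantive combinatorial content — the edge-coloring that realizes the column-balance condition — is identical to the classical case and was already carried out in the proof of Theorem~\ref{mythm1}, and the divisibility equivalence $v\mid b\Leftrightarrow k\mid r$ is a design-parameter identity independent of the attack model. The only thing one must check is that the ordering of the blocks chosen for perfect secrecy does not interfere with the $(t-1)$-fold V-oracle security guaranteed by Theorem~\ref{DeSoete_oracle}; but the V-oracle deception probabilities, like the classical ones, depend only on the multiset of blocks (equivalently, on which $t$-subsets and sub-subsets lie in which blocks), not on the internal ordering of each block, so any reordering preserves them. Hence the proof is a short assembly: "apply Theorem~\ref{DeSoete_oracle}; then the perfect-secrecy part is verified exactly as in the proof of Theorem~\ref{mythm1}; optimality follows from Theorem~\ref{masssch_oracle}." I would write it as a two-or-three sentence proof that explicitly points back to the proof of Theorem~\ref{mythm1} for the edge-coloring construction, rather than repeating it.
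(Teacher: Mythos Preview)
Your proposal is correct and matches the paper's own proof essentially verbatim: the paper invokes Theorem~\ref{DeSoete_oracle} for $(t-1)$-fold security in the V-oracle model, refers back to the proof of Theorem~\ref{mythm1} for the perfect-secrecy (edge-coloring) argument under equiprobable encoding rules, and cites Theorem~\ref{masssch_oracle} for the optimality criterion. Your additional remark that reordering the blocks does not disturb the V-oracle security is a useful sanity check, though the paper leaves it implicit.
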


\begin{proof}
By Theorem~\ref{DeSoete_oracle}, the system is $(t-1)$-fold secure against spoofing in the V-oracle model. Under the assumption that the encoding rules are used with equal probability, we may proceed as in the proof of Theorem~\ref{mythm1} to verify that the system also achieves perfect secrecy. With respect to Theorem~\ref{masssch_oracle} optimality is obtained if and only if $\lambda=1$.\qed
\end{proof}

Clearly, Theorem~\ref{mythm2} can also be applied to the V-oracle model.

\begin{theorem}\label{mythm2_oracle}
For all integers $t$ and $v$ with  $v \equiv t \; (\emph{mod} \;$ $(t+1)!^{2t+1})$ and $v \geq t+1 >0$,
there exists a perfect secrecy system for $t+1$ equiprobable source states, having $v$ messages and $b=(t+1)!^{2t}t! {v \choose t}$ encoding rules, that is $(t-1)$-fold secure against spoofing in the V-oracle model.
\end{theorem}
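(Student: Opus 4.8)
The plan is to combine Theorem~\ref{Teirl} (Teirlinck's existence result) with the V-oracle analogue Theorem~\ref{mythm1_oracle} in exactly the way Theorem~\ref{mythm2} combines Teirlinck's result with Theorem~\ref{mythm1}. First I would invoke Theorem~\ref{Teirl}: for the given $t$ and any $v$ with $v \equiv t \pmod{(t+1)!^{2t+1}}$ and $v \geq t+1 > 0$, there exists a \mbox{$t$-$(v,t+1,(t+1)!^{2t+1})$} design, i.e.\ a design with $k = t+1$ and $\lambda = (t+1)!^{2t+1}$.

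Next I would verify the divisibility hypothesis $v \mid b$ required by Theorem~\ref{mythm1_oracle}. By Lemma~\ref{Comb_t=5}, $v \mid b = \lambda\binom{v}{t}/\binom{k}{t}$ is equivalent to $k(k-1)\cdots(k-t+1) \mid \lambda(v-1)(v-2)\cdots(v-t+1)$, which in our case reads $(t+1)! \mid (t+1)!^{2t+1}(v-1)\cdots(v-t+1)$ — manifestly true since $(t+1)! \mid (t+1)!^{2t+1}$. (This is the identical computation already carried out in the proof of Theorem~\ref{mythm2}, so it can simply be cited.) Then I would compute the number of encoding rules: $b = \lambda\binom{v}{t}/\binom{k}{t} = (t+1)!^{2t+1}\binom{v}{t}/\binom{t+1}{t} = (t+1)!^{2t+1}\binom{v}{t}/(t+1) = (t+1)!^{2t}\,t!\,\binom{v}{t}$, matching the claimed value.

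Finally I would apply Theorem~\ref{mythm1_oracle} to this design: it yields a perfect secrecy system for $k = t+1$ equiprobable source states with $v$ messages and $b = (t+1)!^{2t}t!\binom{v}{t}$ encoding rules that is \mbox{$(t-1)$-fold} secure against spoofing in the V-oracle model, which is precisely the assertion. Since every ingredient — Teirlinck's theorem, the divisibility check, the arithmetic simplification of $b$, and Theorem~\ref{mythm1_oracle} — is already available, there is no real obstacle here; the statement is an immediate transcription of Theorem~\ref{mythm2} with Theorem~\ref{mythm1} replaced by its V-oracle counterpart. The only point requiring the slightest care is confirming that $\lambda = (t+1)!^{2t+1}$ and $k = t+1$ substituted into $\lambda\binom{v}{t}/\binom{k}{t}$ really collapse to $(t+1)!^{2t}t!\binom{v}{t}$, which follows from $\binom{t+1}{t} = t+1$ and $(t+1)!^{2t+1}/(t+1) = (t+1)!^{2t}\cdot t!$.
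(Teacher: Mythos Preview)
Your proposal is correct and follows exactly the approach the paper intends: the paper does not even write out a separate proof for Theorem~\ref{mythm2_oracle}, merely remarking beforehand that ``Clearly, Theorem~\ref{mythm2} can also be applied to the V-oracle model,'' and your argument is precisely that transcription---invoke Teirlinck, reuse the divisibility computation from the proof of Theorem~\ref{mythm2}, and conclude via Theorem~\ref{mythm1_oracle} in place of Theorem~\ref{mythm1}.
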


All the results in Section~\ref{one_immune} and Section~\ref{multi_immune} may be transferred accordingly.


\section{Conclusion}\label{Conl}

We have given novel perfect secrecy systems that provide immunity to spoofing attacks under equiprobable source probability distributions. Our construction method generalized in a natural manner the approach in~\cite{Hub2009} and allowed us to use \mbox{$t$-designs} instead of merely Steiner \mbox{$t$-designs} in the construction process. From a theoretical point of view, we have shown that based on Teirlinck's existence result for $t$-designs,  perfect secrecy systems can be generated that can reach an arbitrary high level of security. Concerning explicit constructions, we have obtained, via cyclic difference families, very efficient constructions of new optimal systems that are onefold secure against spoofing. By using \mbox{$t$-designs} for large values of $t$, we have also presented the first near-optimal systems that are $5$- and $6$-fold secure as well as further systems with a feasible number of keys that are $7$-fold secure against spoofing.  Previous constructions of multifold secure systems had been known only for arbitrary source probability distributions, which inherently result in larger numbers of encoding rules for achieving the same level of security. We have furthermore applied our results to a recently extended authentication model, where the opponent has access to a verification oracle. Novel perfect secrecy systems with immunity to spoofing in the verification oracle model have been obtained this way.

\end{document}